\let\origfootnote\footnote
\renewcommand{\footnote}[1]{%
    \begingroup
    \setstretch{1}%
    \origfootnote{#1}%
    \endgroup
}
\theoremstyle{plain}
\newtheorem{assumption}{Assumption}
\newtheorem{corollary}{Corollary}
\newtheorem{lemma}{Lemma}
\newtheorem{thm}{Theorem}
\newtheorem*{proposition*}{Proposition}
\newtheorem*{lemma*}{Lemma}
\theoremstyle{remark}
\NewDocumentEnvironment{thmrep}{O{}O{}+b}{ \begin{theoremEnd}[no link to proof,proof at the end,restate,#2]{thm}[#1]
#3
\end{theoremEnd}
}{}
\NewDocumentEnvironment{lemmarep}{O{}O{}+b}{ \begin{theoremEnd}[no link to proof,proof at the end,restate,#2]{lemma}[#1]
#3
\end{theoremEnd}
}{}
\NewDocumentEnvironment{propositionrep}{O{}O{}+b}{ \begin{theoremEnd}[no link to proof,proof at the end,restate,#2]{proposition}[#1]
#3
\end{theoremEnd}
}{}
\NewDocumentEnvironment{corollaryrep}{O{}O{}+b}{ \begin{theoremEnd}[no link to proof,proof at the end,restate,#2]{corollary}[#1]
#3
\end{theoremEnd}
}{}
\NewDocumentEnvironment{proof_inline}{b}{ 
\begin{proofEnd}
#1
\end{proofEnd}
}{}
\newcommand{\real}{\ensuremath{\mathbb R}}
\newcommand{\natnumb}{\ensuremath{\mathbb N}}
\newcommand{\bp}{\mathbf{p} }
\newcommand{\by}{\mathbf{y} }
\newcommand{\bmin}{\boldsymbol{\min} }
\newcommand{\C}{\mathcal C}
\newcommand{\Beta}{\mathrm{B}}
\title{
Identifying the Distribution of Welfare from Discrete Choice\protect\footnotetext{E-mail: \href{mailto:bart.capeau@ulb.be}{\texttt{bart.capeau@ulb.be}}, \href{mailto:sebastiaan.maes@uantwerpen.be}{\texttt{sebastiaan.maes@uantwerpen.be}}
\newline \textbf{Acknowledgements:} We are deeply grateful to Andr\'e Decoster for countless stimulating discussions and his unwavering encouragement. We also thank two anonymous referees, Debopam Bhattacharya, Laurens Cherchye, Beno\^it Decerf, Bram De Rock, Geert Dhaene, Peter Haan, Karim Kilani, Erwin Ooghe, Erik Schokkaert, and Frederic Vermeulen, as well as conference and workshop participants at KU Leuven, IAAE, IIPF, LAGV, LISER, and NESG for valuable suggestions and comments on earlier versions of this paper. Bart Cap\'eau acknowledges support from the FWO and the F.R.S.-FNRS (project number EOS~30544469). Bart Cap\'eau and Liebrecht De Sadeleer acknowledge support from the FWO (project number~G073020N). Sebastiaan Maes benefited from doctoral and postdoctoral fellowships of the FWO (project numbers 11F8919N and 12C8623N). The results and their interpretation are the authors' sole responsibilities.}
}
\author[$\dagger$]{Bart Cap\'eau
}
\author[$\mbox{}$]{Liebrecht De Sadeleer
}
\author[$\diamond$]{Sebastiaan Maes}
\affil[$\dagger$]{
Department of Economics, KU Leuven and ECARES, ULB
}
\affil[$\diamond$]{
Department of Economics, University of Antwerp
}
\date{\today}
\begin{document}

\maketitle
\begin{abstract}
\noindent Empirical welfare analyses often impose stringent parametric assumptions on individuals’ preferences and neglect unobserved preference heterogeneity. We develop a framework to conduct individual and social welfare analysis for discrete choice that does not suffer from these drawbacks. We first adapt the class of individual welfare measures introduced by \cite{Fleurbaey2009} to settings where individual choice is discrete. Allowing for unrestricted, unobserved preference heterogeneity, these measures become random variables. We then demonstrate that their distribution can be derived from choice probabilities, which can be estimated nonparametrically from cross-sectional data. Additionally, we derive nonparametric results for the joint distribution of welfare and welfare differences, and 
for social welfare. The former is an important tool in determining whether the winners of a price change belong disproportionately to those groups who were initially well-off.
\end{abstract}

\noindent \textbf{Keywords:} discrete choice, nonparametric welfare analysis, individual welfare, social welfare, money metric utility, compensating variation, equivalent variation
		
\noindent \textbf{JEL codes:} C14, C35, D12, D63, H22, I31


\clearpage
\section{Introduction}\label{sec:introduction}

Discrete choice random utility models (DC-RUMs) have a long tradition in both theoretical and empirical microeconometric research. They have been applied to a wide range of problems in education, health care, industrial organization, labor, marketing, public finance, and transportation.\footnote{Some parametric models within this class, such as the binary and multinomial logit models, yield convenient closed-form choice probabilities, which makes them a popular choice in applied work. For a comprehensive overview, see \cite{trainDiscreteChoiceMethods2003}.} The success of DC-RUMs can be explained by their ability to model individual demand among a discrete set of alternatives in a flexible way, allowing for the presence of unobserved heterogeneity in individual preferences. As econometric models typically explain only a small part of the variation in choice data, unobserved heterogeneity is thought to be an important driver of individual demand in empirical applications. Neglect or misspecification of this heterogeneity might introduce substantial biases into the analysis.

We develop a framework to conduct individual and social welfare analysis in DC-RUMs that allows for unrestricted, unobserved heterogeneity in individuals' preferences. Our revealed preference approach is entirely nonparametric and, therefore, does not suffer from misspecification in the econometric model. The framework is sufficiently general to study both \emph{levels} and \emph{differences} of individual welfare, where the latter measure individuals' gains or losses induced by an exogenous price change. Characterizing these concepts is of first-order importance to applied welfare analysis for at least three reasons. Firstly, knowledge of levels of welfare enables researchers to rank individuals according to their well-being in any given situation, distinguishing between those who are well-off and those who are less well-off. In aggregating these levels across individuals, overall social welfare can be calculated and compared between two situations. Secondly, knowledge on differences of welfare allows to assess individuals' welfare gains or losses from a price change, distinguishing between winners and losers. Thirdly, joint knowledge on levels and differences of welfare reveals the association between individuals' gains or losses from a price change and their position in terms of initial welfare. This allows for the assessment of, for example, whether the winners of a price change belong disproportionately to those groups who were initially well-off.

Our results complement and extend the recent findings of \citeauthor{bhattacharyaNonparametricWelfareAnalysis2015} (\citeyear{bhattacharyaNonparametricWelfareAnalysis2015}, \citeyear{bhattacharyaEmpiricalWelfareAnalysis2018}), who studies the distribution of the compensating and equivalent variation (CV and EV), in several important directions.\footnote{This is of theoretical and practical interest. For example, an important limitation of the CV and EV is that they cannot be used in the labor supply context, as both the initial and final prices (i.e., wages) differ across individuals. It has been shown that not using a common reference price in welfare analysis exhibits unattractive features \citep{king,CapeauDecosterDeSadeleer2023}.} First, we do not only consider welfare differences, but also derive nonparametric results for the distribution of welfare levels and for the joint distribution of welfare levels and differences. In doing so, this paper is the first to study welfare levels in a nonparametric setting with unrestricted, unobserved heterogeneity. Second, we go beyond \citeauthor{Samuelson1974}'s (\citeyear{Samuelson1974}) money metric utilities (MMUs), and show that our results hold for a much broader class of welfare metrics. Our results characterize what can be learned about individual and social welfare from cross-sectional and panel data. Third, we also provide all results conditional on the observed pre- or post-price change choices, which can reduce the uncertainty in the welfare estimates. Moreover, conditioning on these observed choices might also be important from a political economy perspective.

To operationalize our framework, we first adapt the class of individual welfare measures introduced by \cite{Fleurbaey2009} to settings where individual choice is discrete instead of continuous. We call them nested opportunity set~(NOS) measures.\footnote{In a continuous choice setting, these measures coincide with the class of individual welfare measures which follow the \emph{equivalence approach} discussed and advocated by \cite{Fleurbaey2009,Fleurbaey2011}. In the context of ranking distributions of bundles of goods across individuals, \cite{BosmansDecancqOoghe2018} use the NOS measures in their  \emph{reference set welfarism} criterion for social rankings. They are also used by \cite{piacquadio17} in his fairness characterization of utilitarianism.} These welfare measures were developed in line with the growing consensus that well-being, be it measured at the individual micro- or at a nation-wide macro-level, needs to be assessed in a multi-dimensional way, which goes beyond income alone (e.g., see \citealp{StiglitzSenFitoussi} and \citealp{fleurbaey2013beyond}). 

We show that MMUs are within the class of NOS measures, and use them as a leading example to illustrate our approach.\footnote{The use of welfare measures based on the expenditure function, the so-called MMUs, is a well-established practice in the applied welfare literature (for seminal contributions, see \citealp{diamondmcfadden}; \citealp{dixit}; \citealp{king}).} As a result, the well-known CV and EV, which are both measures of differences of individual welfare, are embedded in our framework. Our results therefore generalize the findings of
\cite{johnk.dagsvikCompensatingVariationHicksian2005} and \cite{depalmaTransitionChoiceProbabilities2011} to settings where unobserved heterogeneity is essentially unrestricted. 

As the presence of unobserved heterogeneity renders these NOS measures stochastic from the point of view of the econometrician, we then show how their distributions relate to what is typically observed in cross-sectional and panel data. In particular, we  prove that the marginal distribution of NOS measures can be recovered nonparametrically from cross-sectional data by evaluating the observed choice probabilities at counterfactual prices. This allows researchers to study levels of individual welfare in any given situation. Likewise, we show that the joint distribution of welfare levels and welfare differences can be recovered nonparametrically from panel data by evaluating the observed transition probabilities at counterfactual prices.\footnote{These transition probabilities are derived under the assumption that unobserved individual preferences are unaltered by the price change, which implies perfect correlation in unobserved heterogeneity before and after the price change. Alternatively, \cite{johnk.dagsvikDiscreteChoiceContinuous2002} and \cite{dellesiteTransitionChoiceProbabilities2013} consider models where there is imperfect correlation.} Building on these results, we are able to nonparametrically characterize levels and differences in aggregate welfare for any additively separable social welfare function.

Our identification  results are constructive and can be implemented in empirical work using only nonparametric regression. We also demonstrate how Boole-Fr\'echet inequalities \citep{frechet} and stochastic revealed preference restrictions can be exploited to construct sharp bounds on the transition probabilities in the common event when only cross-sectional data is available.\footnote{In the context of continuous choice, similar inequalities have been exploited by \cite{hoderleinstoye}, \cite{kitanurastoye}, and \cite{debkitamura22}.} These bounds are functionals of the choice probabilities and are, as such, straightforward to implement. They can readily be used to set-identify the concepts that are expressed in terms of transition probabilities.

As a by-product, we can not only condition our results on exogenous characteristics, but also derive results conditional on the choices before or after the price change. This allows researchers to take the additional information conveyed by the observed choices into account. Conditioning on observed choices restricts the admissible set of unobserved preference heterogeneity, such that individual welfare can be measured more precisely. In addition, this conditioning is also relevant from a political economy perspective. It provides answers to questions like ‘What is the welfare impact of a refundable tax credit on the \emph{unemployed}?’ and ‘How do congestion taxes affect the welfare of \emph{drivers}?’. Finally, in the labor supply setting, these conditional results are also required to study how welfare gains vary with the level of earned income, as the latter is endogenous.

\paragraph{Related literature.}
The structural modeling of individual preferences in DC-RUMs renders this class of models especially suitable for the welfare analysis of price changes. Over the last fifteen years, a methodological literature has emerged that derives closed-form expressions for the distribution of the~CV and~EV under ever less parametric assumptions on the nature of individuals' preferences.\footnote{Before, no closed-form expressions existed, even for the expected values of the CV and EV. Therefore, researchers had to resort to approximations, except for the most simple of DC-RUMs in which individuals have constant marginal utility of income and unobserved heterogeneity is additive and generalized extreme value distributed  (\citeauthor{smallAppliedWelfareEconomics1981}, \citeyear{smallAppliedWelfareEconomics1981};  \citeauthor{mcfaddenComputingWillingnesstoPayRandom1999}, \citeyear{mcfaddenComputingWillingnesstoPayRandom1999}). These approximations are either biased \citep{moreyRepeatedNestedLogitModel1993}, rather uninformative \citep{herrigesNonlinearIncomeEffects1999}, or computationally burdensome \citep{mcfaddenComputingWillingnesstoPayRandom1999}.}
For the class of additive DC-RUMs, \cite{johnk.dagsvikCompensatingVariationHicksian2005} provide expressions for the distribution of the CV based on compensated (Hicksian) choice probabilities. The authors provide 
analytical results 
for models where unobserved heterogeneity is generalized extreme value distributed. Alternatively, \cite{depalmaTransitionChoiceProbabilities2011} advance a direct approach for this class, in which they express this distribution in terms of uncompensated (Marshallian) choice probabilities. More recently, \citeauthor{bhattacharyaNonparametricWelfareAnalysis2015} (\citeyear{bhattacharyaNonparametricWelfareAnalysis2015}, \citeyear{bhattacharyaEmpiricalWelfareAnalysis2018}) showed that the marginal distributions of the CV and EV can be written as a 
functional of uncompensated choice probabilities, even when unobserved heterogeneity is essentially unrestricted, and therefore possibly nonadditive.
This paper generalizes and extends these results as described above.

Several semiparametric methods have been developed to relax functional form assumptions on either deterministic preferences or the distribution of unobserved heterogeneity in DC-RUMs (for early results see \citealp{manskiMaximumScoreEstimation1975};  \citealp{matzkinSemiparametricEstimationMonotone1991}; and \citealp{kleinEfficientSemiparametricEstimator1993}). Other contributions introduce entirely nonparametric methods that do not impose functional form restrictions on either of these components for this class of models, based on either shape restrictions (e.g., see \citealp{matzkinNonparametricIdentificationEstimation1993}) or the availability of a large-support {special regressor} (e.g., see \citealp{lewbelSemiparametricQualitativeResponse2000} and \citealp{brieschNonparametricDiscreteChoice2010}). The approach we follow in this paper deviates from this literature as our objective is not to recover deterministic preferences and the distribution of unobserved heterogeneity, but instead to identify individual welfare measures which are functions of both these model primitives.

In recent years, a comprehensive theoretical framework for measuring individual well-being has been developed that encompasses both the classical MMUs~(\citealp{Samuelson1974}), adaptations of other measures like~\citeauthor{Pazner1979}'s (\citeyear{Pazner1979}) ray utilities, and measures like the equivalent income and wage metrics (among others, see \citealp{pencavel1977constant}; \citealp{fleurbaey2007two}; \citealp{Fleurbaey2009}; 
\citealp{decancq2015happiness}; and \citealp{Fleurbaey2017}).
Almost all of these measures cardinalize preferences by associating their indifference sets with members of a  family of nested opportunity sets; i.e., a lower ranked indifference set is associated with a smaller opportunity set. The sizes of those opportunity sets are argued to be an ethically more meaningful basis for interpersonal comparisons of well-being than income or reports on subjective satisfaction levels. Indeed, contrary to income and subjective satisfaction, such measures ensure that individuals with the same preferences and in a situation which makes them indifferent among each other are always considered to be equally well-off. We adapt this class of individual welfare measures to settings where individual choice is discrete and show how the distribution of these NOS measures relates to what is typically nonparametrically observed in cross-sectional and panel data. 

Another strand of literature focuses on the nonparametric identification of counterfactual choices and welfare under unobserved heterogeneity in models where demand is continuous instead of discrete. Most results exploit the smoothness of the underlying individual demand functions to arrive at Slutsky-like restrictions on average and quantile demands (e.g., see \citealp{detteTestingMultivariateEconomic2016}; \citealp{hausmanIndividualHeterogeneityAverage2016a}; \citealp{blundellNonparametricEstimationNonseparable2017}; and \citealp{hoderleinEstimatingDistributionWelfare2018}). Other results exploit the axioms of revealed preference to attain identification under the presence of unobserved heterogeneity  (e.g., see \citealp{blundellBoundingQuantileDemand2014}; and \citealp{cosaertNonparametricWelfareDemand2018}). In contrast to our results, however, the availability of cross-sectional and short panel data is generally not sufficient to point-identify the distribution of welfare levels and differences in settings where demand is continuous and unobserved heterogeneity is unrestricted.

Finally, this paper contributes to the literature that applies NOS measures empirically. Using microdata, \cite{Bargain_2013} and \cite{Decoster2015} estimate  parametric DC-RUMs to model labor supply and construct rankings of households based on NOS measures. \cite{Carpantier2016} extend the approach of \cite{Decoster2015} by integrating unobserved preference heterogeneity through a numerical procedure (comparable to the approach of \cite{herrigesNonlinearIncomeEffects1999} for welfare differences). Our results show that the parametric assumptions imposed in these papers are not necessary to obtain identification.

\paragraph{Outline of the paper.}

The remainder of this paper is organized as follows. In the next section, we introduce the class of NOS welfare measures for settings where choice is continuous. In Section \ref{sec:formal_definitions}, our conceptual framework is laid out. We specify the DC-RUM and adapt the class of NOS welfare measures to this discrete setting. In Section \ref{sec:distribution}, we show that the distribution of these objects can be derived from choice and transition probabilities. In addition, we derive nonparametric results for the joint distribution of welfare and welfare differences, as well as for social welfare. In Section~\ref{sec:implement}, we discuss how the choice and transition probabilities can be retrieved from cross-sectional data. Section~\ref{sec:conclusion} contains concluding remarks. All proofs and some additional results are collected in the Appendix.

\section{NOS welfare measures in a continuous setting}\label{sec:metrics_continuous}
In this section, we briefly explain and motivate the class of NOS welfare measures, which have been introduced by~\cite{Fleurbaey2009} for settings where choice is continuous.\footnote{The discussion here is informal as additional assumptions are needed to guarantee the existence and the uniqueness of the NOS welfare measure. A more rigorous treatment is postponed to Section \ref{ssec:metric_discrete}.} These measures cardinalize preferences by associating each indifference set with a member of a family of nested opportunity sets, which is common to all individuals.\footnote{Note that these opportunity sets are a conceptual device to cardinalize preferences and are unrelated to individuals' actual budget sets.} A lower ranked indifference set is associated with a smaller set from that family, such that the size of the opportunity set acts as a measure of individual well-being, respecting that individual's preferences. 

Formally, let $\Beta \subseteq \mathbb{R}^n_+$ be the set of all bundles~$\mathbf{b}$ an agent can possibly obtain, and let $\{B_\lambda \subseteq \Beta  \mid \lambda \in \Lambda \subseteq \mathbb{R}\}$ denote a family of nested opportunity sets indexed with a parameter $\lambda$ such that $\lambda < \lambda'$ implies that $B_{\lambda} \subsetneq B_{\lambda'}$. Given a well-behaved utility function $U(\mathbf{b}): \Beta \rightarrow \mathbb{R}$, the NOS welfare measure evaluated in a bundle $\mathbf{b}\in \Beta$ is then defined as 
 \begin{equation}
    \label{eq:metriccont}
     W(\mathbf{b}) = \max\left\{\lambda \in \Lambda \mid U(\mathbf{b}) \geq \max_{\mathbf{b}' \in B_\lambda}U(\mathbf{b}')\right\},
 \end{equation}
 that is, the largest value of $\lambda$ --- or, equivalently, the largest opportunity set --- for which the individual still weakly prefers bundle $\mathbf{b}$ above all bundles $\mathbf{b}'$ in the set $B_\lambda$.

\begin{figure}
\begin{subfigure}{0.5\textwidth}
    \centering
\begin{tikzpicture}[scale=2.25,domain=0:1]
    \draw [<->,thick] (0,2.5) node (yaxis) [above] {good 2}
        |- (2.7,0) node (xaxis) [right] {good 1};
 \def\a{0.388658088} 
  \def\b{0.35+\a}
 \def\c{0.665} 
 \def\d{atan(\c)}
        \path
    coordinate (fullhealthstart) at (2.5,0)
        coordinate (fullhealthtop) at (2.5,2.5)
    coordinate (c1) at (0,2.25)
    coordinate (c2) at (0,\b)
    coordinate (c3) at (2.5,1.5)
    coordinate (c4) at (1,\a)
    coordinate (slut) at (2.7,.5)
    coordinate (top) at (4.2,2);

       \filldraw [color=black,thick,fill=darkgray!20,semitransparent]  (0,2) parabola[bend pos = 0]  (2.5,0.5)--(2.5,0)--(0,0)--cycle;

  \filldraw [color=black,thick,fill=darkgray!70,semitransparent]  (0,1) parabola[bend pos = 0] (2.5,0)--(2.5,0)--(0,0)--cycle;

        \draw[name path = R1,color = black,thin]  (c1) parabola[bend pos = 1] (c3) ;

\draw[name path = R3,color = black,thick]  (0,1.5) parabola[bend pos = 1] (2.5,0.505) ; 
                
               \node[ black,  below left ] at (1.2,0.7) {$B_{\lambda_1}$} ;
%
%
                       \node[ black,  below left ] at (1.4,1.5) {$B_{\lambda_2}$} ;
   \fill[black] (2,1.53) circle (0.5pt) node[below right , black] {$\mathbf{b}_2$};
    \fill[black] (2.3,0.514) circle (0.5pt) node[below, black] {$\mathbf{b}_1$};


\end{tikzpicture}
\caption{NOS as a measure of well-being}
    \label{fig:noswb}   
\end{subfigure}%
\begin{subfigure}{0.5\textwidth}
  \centering
\begin{tikzpicture}[scale=2.25,domain=0:1]
   \draw [<->,thick] (0,2.5) node (yaxis) [above] {good 2}
        |- (2.7,0) node (xaxis) [right] {good 1};
 \def\a{0.388658088} 
  \def\b{0.35+\a}
 \def\c{0.665} 
 \def\d{atan(\c)}
        \path
    coordinate (fullhealthstart) at (2.5,0)
        coordinate (fullhealthtop) at (2.5,2.5)
        coordinate (c1) at (0,2.05)
        coordinate (c1prime) at (0,1.60)
    coordinate (c2) at (0,\b)
    coordinate (c3) at (2.5,0.45)
    coordinate (c3prime) at (2.5,0.75)
    coordinate (c4) at (1,\a)
    coordinate (slut) at (2.7,.5)
    coordinate (top) at (4.2,2);
    
    \draw[name path = R1,color=black,thick]  (c1) parabola[bend pos = 1] (c3) ;
    \node[ black,  above right ] at (c1) {$R$} ;
    \draw[name path = R2,color=black,thick]  (c1prime) parabola[bend pos = 1] (c3prime)  ;
        \node[ black,  above right ] at (c1prime) {$R'$} ;
 

     \def\e{0.91}
     \def\f{1.17}
     \draw [gray,dashed,thick,name path = refwage1](0,\e)--(2.5,\e-1/2);
     \draw [gray,dashed,thick,name path = refwage2](0,\f)--(2.5,\f-1/2);
%



   \fill[black] (0.915,1.09) circle (0.5pt) node[above right , black] {$\mathbf{b}$};
     \fill[black] (1.79,0.815) circle (0.5pt) ;
     \fill[black] (2.1,0.485) circle (0.5pt) ;     
\end{tikzpicture}
        \caption{The MMU measure}
    \label{fig:MMU}
\end{subfigure}
\caption{NOS welfare measures in a continuous choice setting}
\end{figure}
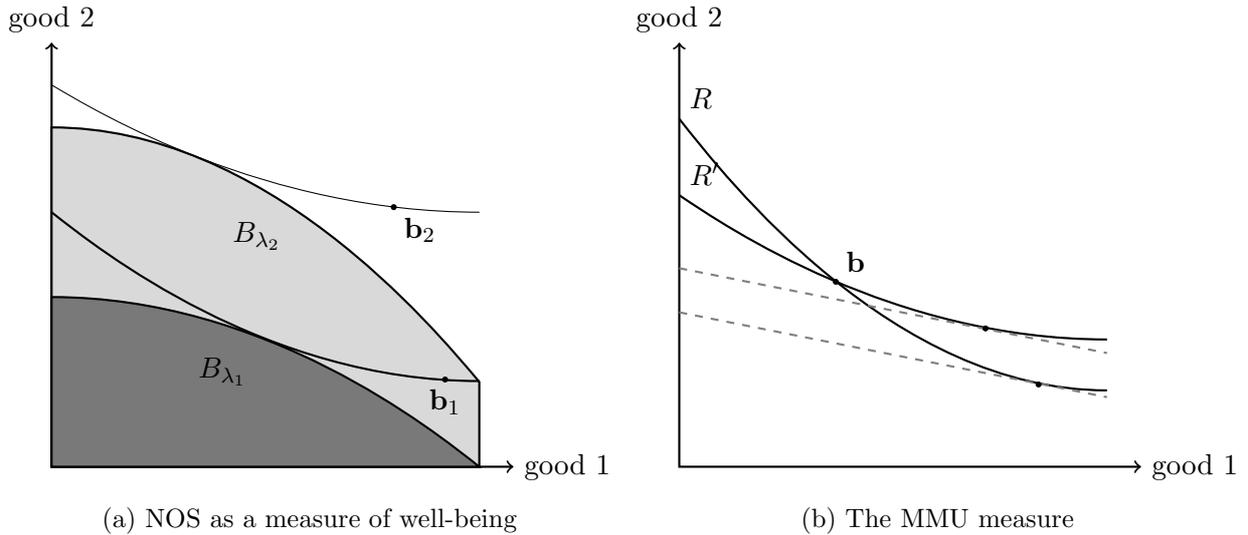

 This definition is illustrated in Figure~\ref{fig:noswb} by means of a classical trade-off between two goods. Suppose first an individual obtains a bundle~$\mathbf{b}_1$, and let the thick black line denote her indifference curve passing through~$\mathbf{b}_1$.
 This indifference curve is associated with the opportunity set $B_{\lambda_1}$, which is shaded in dark gray. In accordance with the definition in Equation~\eqref{eq:metriccont}, this set is designed such that the individual could obtain, at best, a bundle equally good as~$\mathbf{b}_1$, when she would be faced with the opportunity set~$B_{\lambda_1}$. Suppose now that the individual  obtains a bundle~$\mathbf{b}_2$, which is better than~$\mathbf{b}_1$, according to her own preferences. Then the associated opportunity set~$B_{\lambda_2}$, which is shaded in light gray, is again such that the best bundle in this set is equally good as~$\mathbf{b}_2$, and includes the  set $B_{\lambda_1}$. From this illustration, it is clear that the size of these opportunity sets serves as a measure of individual well-being that respects preferences, in the sense that the well-being level of an individual in situation~$\mathbf{b}_2$ is higher than her well-being level in situation~$\mathbf{b}_1$, if and only if that individual prefers~$\mathbf{b}_2$ to~$\mathbf{b}_1$. The size of an opportunity set is measured by its indexing parameter~$\lambda$. 

A well-known and often used set of welfare measures within the NOS class is the set of MMUs  \citep{Samuelson1974}.\footnote{ Other examples of NOS measures are the ray utilities of~\cite{Pazner1979} and the equivalent income metrics  introduced in~\cite{decancq2015happiness}.
}  In this case, the NOS are of the form
\begin{equation}
    B_{\lambda}\equiv\left\{\mathbf{b} \in \Beta \left| \mathbf{b}'\bp^{ref} \leq\lambda\right.\right\},
\end{equation}
where $\bp^{ref}$ is a vector of reference prices that is fixed by the researcher. In this specification, the shape of the family of nested opportunity sets is  determined by a set of hyperplanes with normal $\mathbf{p}^{ref}$. The vector of reference prices determines the slope of the hyperplane. 

Applying Equation~\eqref{eq:metriccont}, we find that the MMU measures well-being in a bundle by the maximal monetary amount that can be granted to an individual faced with reference prices~$\mathbf{p}^{ref}$, such that she would at most be equally well-off as in that bundle. This coincides with the expenditure function representation of preferences. The indexing parameter $\lambda$ equals $\mathbf{b}'\bp^{ref}$ and can thus be interpreted as a monetary amount.
In Figure~\ref{fig:MMU}, we illustrate the opportunity sets associated with these measures. The dashed lines  define the upper bound of two elements from the family of this particular MMU with reference prices $\left(p^{ref}_1,p^{ref}_2\right)$ and have slope $-p_1^{ref}/p_2^{ref}$. For a particular bundle~$\mathbf{b}$, the person with indifference curve $R'$ through that bundle is considered to be better off than the person with indifference curve $R$.

\begin{figure} 
  \centering
\begin{tikzpicture}[scale=3.00,domain=0:1]
   \draw [<->,thick] (0,2.75) node (yaxis) [above] {good 2}
        |- (4.45,0) node (xaxis) [right] {good 1};

 \def\a{0.388658088} 
  \def\b{0.35+\a}
 \def\c{0.665} 
 \def\d{atan(\c)}
        \path

    coordinate (c1) at (0.5,2)
    coordinate (c1prime) at (3.1,0.6)
    coordinate (c2) at (0,\b)
    coordinate (c3) at (3,0.25)
    coordinate (c3prime) at (2.5,0.75)
    coordinate (c4) at (1^0.5,1^.5)
    coordinate (slut) at (2.7,.5)
    coordinate (top) at (4.2,2);

   \draw[domain=0.5:3, smooth, variable=\x, black, thick] plot ({\x}, {1/\x});
       \node[ black, above  right ] at (c1) {$R$} ;
   \draw[domain=0.25:3, smooth, variable=\x, black, thick] plot ({\x}, {(1/\x^0.35)^(1/.65)});
        \node[ black,  above left ] at (c1prime) {$R'$} ;

     \def\e{1.2}
     \def\f{2.71}
     \def\g{1.0325}
     \def\sa{4.4/\e}
     \def\h{2.355}
     \def\sb{1.45/\f}
     \draw [gray,thick,name path = refwage1](0,\e)--(4.4,0);
     \draw [gray,thick,name path = refwage1](0,\g)--(\g*\sa,0);
     \draw [gray,dashed,thick,name path = refwage2](0,\f)--(1.45,0);
      \draw [gray,dashed,thick,name path = refwage2](0,\h)--(\h*\sb,0);

   \fill[black] (1,1) circle (0.75pt) node[above right , black] {$\mathbf{b}$};
     \fill[black] (0.7+0.01,\f-0.7*\f/1.45+0.01) circle (0.75pt) ;
     \fill[black] (0.45+0.01,\h-0.45*\f/1.45+0.01) circle (0.75pt) ;
     \fill[black] (1.55,\e-1.55*\e/4.4+0.01) circle (0.75pt) ;  
     \fill[black] (2.0,\g-2.0*\e/4.4+0.01) circle (0.75pt) ;  

\end{tikzpicture}
        \caption{Shape of opportunity sets and interpersonal comparisons of well-being}\label{fig:shapeNOS}
\end{figure}
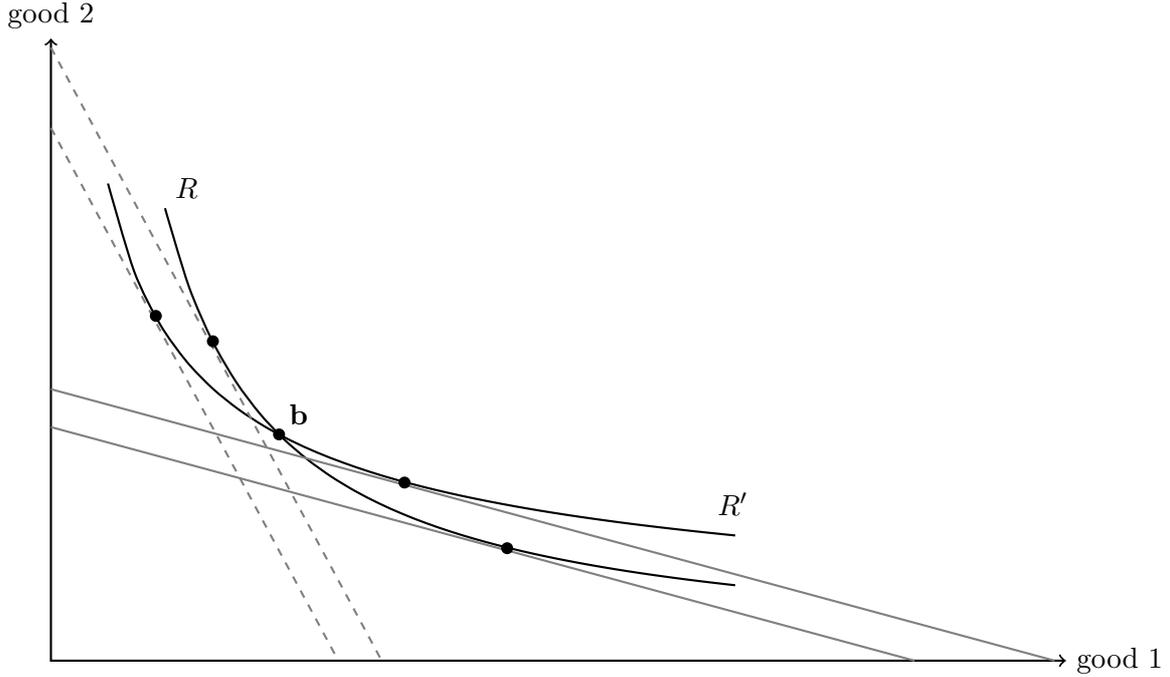
 
 The shape of the opportunity sets determines how interpersonal comparisons are made.  In Figure~\ref{fig:shapeNOS}, we illustrate this for two choices of reference prices within the class of MMUs. The full gray lines represent the upper boundaries of two opportunities sets of an MMU for a given choice  of reference prices. The dashed gray lines are those for another choice of reference prices. Again consider two persons, both endowed with a bundle $\mathbf{b}$. According to the first MMU, the individual with indifference curve $R$ through that bundle is considered to be worse off than the person with indifference curve $R'$. The reverse holds true for the second MMU. 
 
 Normative analyses should advocate principles for choosing a particular set of reference prices, or more generally, for choosing a particular NOS~measure.\footnote{For example, \cite{Fleurbaey2017} provide a characterization of the normative principles laying behind the choice between MMUs and ray utilities as measures of individual well-being.} By contrast, the approach and results developed in this paper are positive: we provide identification results for any welfare measure that is within the NOS class, but do not take a stance on which measure should be chosen in applied welfare analyses. This choice should be guided by researchers' ethical priors and the specific application at hand.

\section{Conceptual framework}\label{sec:formal_definitions}

Our conceptual framework allows for unrestricted, unobserved heterogeneity in DC-RUMs. As this set-up does not impose any restrictions on \textit{observed} individual characteristics, all results in this paper can be thought of as being conditional on these covariates.

\subsection{Discrete choice model}\label{ssec:DCM}

We summarize here the main building blocks and assumptions of DC-RUMs (for a detailed  technical review, see  \citealp{mcfadden1981econometric,mcfaddenRevealedStochasticPreference2005}).
\paragraph{Preferences.}
	Let $\mathcal{H}$ denote the universe of preference types and let $\Pr_\eta$ represent the distribution of these preference types in the population. Every preference type $\eta$ can be thought of as a different individual, who has idiosyncratic preferences over bundles $(y-p_c,c)$, composed of a numeraire good and elements of a finite (and common) set of alternatives $\mathcal{C}$, with $\left|\mathcal{C}\right|\colonequals n\in\natnumb \setminus \{0\}$. The numeraire good can be interpreted as income. The price $p_c$ is then the financial cost of choosing alternative~$c$. As such, the preferences can be thought of as conditional indirect preferences where the individual has in each option~$c$ a budget $y-p_c$ that she allocates optimally across continuously divisible  commodities available at exogenously given commodity prices~(\citealp{mcfadden1981econometric}). 
    The idiosyncratic preferences are assumed to be representable by a utility function $U^\eta_c(y - p_c)\colonequals U(y - p_c, c, \eta): \mathbb{R} \times \mathcal{C} \times  \mathcal{H}  \rightarrow \mathbb{R}$, in which $y$ denotes exogenous income and $p_c$ the price of alternative~$c \in \mathcal{C}$.
	Prices for all alternatives, $\left(p_c, c\in \mathcal{C}\right)$, are collected in a vector denoted by~$\mathbf{p}$ and we will call $\left(\mathbf{p},y\right)$ a budget set. Residual income in alternative $c$ is defined as the amount of the numeraire left over after choosing this alternative, i.e., $y-p_c$.
	
	Note that our formulation of preferences is very flexible as it allows them to differ arbitrarily across individuals.\footnote{Common parametric utility specifications in additive DC-RUMs (where $U^\eta_c(y - p_c) = V_c(y-p_c) + \varepsilon_c(\eta)$), or additive DC-RUMs models with random coefficients (where $U^\eta_c(y - p_c) = V_c(y-p_c, \beta(\eta)) + \varepsilon_c(\eta)$) are encompassed by our approach.  
} The only economically substantial restriction we will impose on this function is that utility is continuous and strictly increasing in the numeraire.
	\begin{assumption}
		\label{assumption1}
		Individual preferences are represented by a utility function $U^\eta_c(y - p_c)$ that is continuous and strictly increasing in the numeraire for every preference type $\eta \in \mathcal{H}$ and every alternative $c \in \mathcal{C}$. Moreover, preferences satisfy the following regularity conditions:
		(R1) For each pair of alternatives $c, c' \in \mathcal{C}$, and for each fixed $y$ and $p_c$, it holds that $U^\eta_c(y-p_c) > \lim_{p_{c'} \rightarrow \infty} U^\eta_{c'}(y-p_{c'})$ and that $U^\eta_c(y-p_c) < \lim_{p_{c'} \rightarrow - \infty} U^\eta_{c'}(y-p_{c'})$. (R2) For every budget set $(\mathbf{p}, y)$, the set of types that are indifferent between two or more alternatives in the choice set~$\mathcal{C}$ has probability measure zero.
	\end{assumption}
\noindent This assumption is ubiquitous in empirical work that employs (semi)parametric DC-RUMs. Monotonicity in the numeraire establishes the existence and uniqueness of our welfare measures and yields stochastic revealed preference conditions that we will exploit to obtain the identification results. Regularity condition (R1) ensures that when the price of a given alternative goes to infinity, it will never be preferred above another alternative with a finite price. Analogously, when the price of a given alternative goes to minus infinity --- or equivalently residual income in that alternative goes to plus infinity --- it will always be preferred above another alternative with a finite price. The negligibility of indifferences between alternatives (R2) ensures that no tie-breaking rule has to be established.

	In addition, we also assume that the distribution of the preference types, denoted by $F(\eta)$, is independent of the budget set $(\mathbf{p}, y)$.

	\begin{assumption}
		\label{assumption2}
		The distribution of unobserved heterogeneity $F(\eta)$ is independent of prices $\mathbf{p}$ and exogenous income $y$: i.e., $F(\eta \mid \mathbf{p}, y) = F(\eta)$.
	\end{assumption}
	\noindent The exogeneity of budget sets is a strong, but standard, assumption in the literature on nonparametric identification of individual demand and welfare (e.g., see \citealp{hausmanIndividualHeterogeneityAverage2016a}). Indeed, to the best of our knowledge, there are no theoretical results that allow for general forms of endogeneity in the presence of unrestricted, unobserved heterogeneity.  

	\paragraph{Individual choice behavior.}
	Finally, we assume that observed choice behavior is actually generated by a DC-RUM.
    This assumption entails that an individual $\eta$ chooses a given alternative $i$, if and only if this alternative yields the highest utility among the elements in her choice set $\mathcal{C}$.
	\begin{assumption}
		\label{assumption3}
		Let $J^\eta(\mathbf{p}, y) \equiv J(\mathbf{p}, y, \eta) : \mathbb{R}^{n+1} \times \mathcal{H} \rightarrow \mathcal{C}$ denote the individual demand function. It holds that $J^\eta(\mathbf{p}, y) = i \iff
		U^\eta_i(y - p_{i}) \geq \max_{c \neq i}\{ U^\eta_{c} (y - p_{c})                   \}$.
	\end{assumption}
\noindent 
    Note that individual demand is single-valued with probability one as one can neglect indifferences between alternatives by regularity condition (R2) in Assumption~\ref{assumption1}.

\paragraph{Choice and transition probabilities.}
	The individual choices induced by a DC-RUM are stochastic from the point of view of the econometrician, as the preferences types are non-observable. When this random variation is averaged out across types, one obtains a set $\{P_i(\mathbf{p}, y)\}_{i \in \mathcal{C}}$ of uncompensated (Marshallian) conditional choice probabilities,
	\begin{equation} \label{eq:cond_choice}
		\begin{split}
			P_i(\mathbf{p}, y) &= \Pr_\eta \left[\Big \{U^\eta_i(y - p_i) \geq \max_{c \neq i }\{U^\eta_{c}(y-p_{c})\}\Big\}\right] \\
			&= \Pr_\eta\left[J^\eta(\bp, y) = i\right] \\
			&= \int_\mathcal{H} \mathbbm{I} \left[ J^\eta(\bp, y) = i \right] d F(\eta \mid \mathbf{p},y) \\
			&= \int_\mathcal{H} \mathbbm{I} \left[ J^\eta(\mathbf{p}, y) = i \right] d F(\eta),\\
		\end{split}
	\end{equation}
	where $\mathbb{I}[\cdot]$ denotes the indicator function.\footnote{These choice probabilities are designated \emph{conditional} as they depend on a vector of prices and income. In the interest of brevity, this qualification will be dropped in the sequel.} The last expression asymptotically coincides with the observed choice frequency for every alternative $i \in \mathcal{C}$, conditional on the budget set $(\mathbf{p}, y)$.\footnote{This concept is also known as the \textit{average structural function} (e.g., see \citealp{blundellEndogeneitySemiparametricBinary2004}). The asymptotic equivalence follows from the law of large numbers as the choice probabilities are essentially conditional expectation functions.} If cross-sectional data contains enough relative price and exogenous income variation, these objects are nonparametrically estimable.\footnote{It is clear from Equation \eqref{eq:cond_choice} that these probabilities are composed of both the utility function $U^\eta_c$ and the distribution of unobserved heterogeneity $F$. As such, they are not sufficiently informative to separately identify these two model primitives. Fortunately, knowledge on such primitives is not necessary for our purposes.}

	Another concept induced by DC-RUMs is the set $\{P_{i, j}(\mathbf{p}, \mathbf{p}', y)\}_{i,j \in \mathcal{C}}$ of  uncompensated conditional {transition} probabilities. These probabilities are formally defined as
	\begin{equation}\label{eq:transprob}
		\begin{split}
			P_{i, j}(\mathbf{p}, \mathbf{p}', y)&= 
				\Pr_\eta \Big[ \Big \{U^\eta_i(y - p_i) \geq \max_{c \neq i}\{U^\eta_c(y-p_c)\}\Big\}  \\
				&\qquad\qquad
			\cap \Big \{U^\eta_{j}(y - p'_j) \geq \max_{c \neq j} \{U^\eta_c (y-p'_c)\}\Big\} \Big] \\
			&=	\Pr_\eta \left[ J^\eta(\bp,y) = i, J^\eta(\bp',y) = j\right]\\
			&= \int_\mathcal{H} \mathbbm{I} \left[ J^\eta(\mathbf{p}, y) = i \right] \mathbbm{I} \left[ J^\eta(\mathbf{p}', y) = j \right] d F(\eta),\\
		\end{split}
	\end{equation}
	which asymptotically coincide with the transition frequencies from alternative $i$ to alternative $j$ after an exogenous price change from $\mathbf{p}$ to $\mathbf{p}'$.\footnote{Note, however, that transition probabilities do not impose any temporal structure. In other words, $P_{i,j}(\mathbf{p}, \mathbf{p}', y) = P_{j,i}(\mathbf{p}', \mathbf{p}, y)$. Furthermore, as shown in Section \ref{sec:implementation}, the assumption that the exogenous income $y$ is common to both situations with prices $\bp$ and $\bp'$ imposes no constraint.} Naturally, if there is no price change, there are no transitions between different choices. In principle, these objects are nonparametrically estimable from panel data with at least two periods. In addition, Section \ref{sec:set} shows how transition probabilities can be set-identified when only cross-sectional data are available.

Implicit in our definition of the transition probabilities is the assumption that individuals' preferences are unaffected by the price change. The perfect correlation between the preference types before and after the price change implies that transition probabilities are not simply equal to the product of their marginals: i.e., $P_{i, j}(\mathbf{p},  \mathbf{p}', y) \neq P_{i}(\mathbf{p}, y) P_{j}(\mathbf{p}', y)$.

Finally, note that nonparametric estimates of the choice and transition probabilities can be evaluated at counterfactual prices and income, provided that these price-income pairs lie within the support of the data. This is particularly important for welfare estimation, which relies on knowledge of choice and transition probabilities at virtual prices (see Section~\ref{sec:distribution}).

\subsection{NOS welfare measures in a discrete choice setting}\label{ssec:metric_discrete}
In Section~\ref{sec:metrics_continuous}, the family of NOS welfare measures was introduced in a setting of continuously divisible goods. In this subsection, we will redefine them rigorously for settings where choice is determined by a DC-RUM that satisfies Assumptions~\ref{assumption1}--\ref{assumption3}. 

\paragraph{Nested opportunity sets in DC-RUMs.}
We define a family of nested opportunity sets, which is common for all preference types $\eta\in \mathcal{H}$, as follows. Let there be a closed set $\Lambda \subseteq \mathbb R$, and define for every $\lambda \in \Lambda$, 
an opportunity set $B_\lambda \subset \real \times \mathcal C$ by
\begin{equation}\label{eq:nested_budget1}
    B_\lambda \colonequals \left\{ (y',c) \mid  c\in\mathcal{C}, y' \in\real,y' \leq y_c^\lambda\right\},
\end{equation}
where $y_c^\lambda \in \mathbb R$ for all $c \in \C$ is satisfying the following assumptions:

\begin{enumerate}[(a)]
\item the function 
    \begin{equation}\label{eq:continuity}
      \Lambda \to \real :\lambda \mapsto y^{\lambda}_c  \text{ is continuous for all }c \in \C,
    \end{equation}
    \item \begin{equation}\label{eq:increasing}
\lambda < \lambda' \implies \begin{cases} \forall c\in \mathcal{C}:  y_{c}^{\lambda} \leq  y_{c}^{\lambda'}, \\ \exists c \in \mathcal{C}: y_{c}^{\lambda}  <  y_{c}^{\lambda'},\end{cases}
\end{equation}
     \item for all options $c'$, 
\begin{equation}\label{eq:infzero}
    \inf_{\lambda \in \Lambda} y^\lambda_{c'} = -\infty,
\end{equation}
    \item and for at least one option $c$, 
\begin{equation}\label{eq:supinfty}
    \sup_{\lambda \in \Lambda} y^\lambda_c = +\infty.
    \end{equation}
\end{enumerate}
Then $(B_\lambda)_{\lambda \in \Lambda}$ is called a family of nested opportunity sets and $\by^{\lambda} := (y^{\lambda}_{1},\ldots,y^{\lambda}_{c},\ldots,y^{\lambda}_{n})$ can be seen as its upper bound.\footnote{One can prove that every family of closed nested sets of which the option-wise suprema satisfy the four conditions above, are necessarily of the form \eqref{eq:nested_budget1}. Hence, the assumption that the $B_\lambda$ are of this form implies no loss of generality.} Note that the family is common to all individuals.

 Conditions \eqref{eq:continuity} and \eqref{eq:increasing} ensure that the family $(B_\lambda)_{\lambda \in \Lambda}$ is continuously increasing. Conditions \eqref{eq:infzero} and \eqref{eq:supinfty} together with Assumption \ref{assumption1} imply that for every bundle $\mathbf{b}$ and preference type $\eta$, there exists a member of the family of which all bundles are considered worse than $\mathbf{b}$ by $\eta$, and one which contains a bundle considered to be better than $\mathbf{b}$ by $\eta$. These properties will prove necessary to define the welfare measure. 

Figure~\ref{fig:NOSets_discrete} provides a graphical illustration. The choice set $\C$ consists of three options: $i$,$j$, and $k$. Two members of a family of nested opportunity sets $B_{\lambda \in \Lambda}$ are shown in gray. For example, all 
bundles in dark gray belong to $B_{\lambda_1}$. For illustrative convenience, we choose $y^\lambda_c < y^{\lambda'}_{c} $ for all $c\in \C$ whenever $\lambda < \lambda'$. Finally, the upper bounds of the $B_{\lambda}$'s, consisting of the points $y^{\lambda_g}_c$, 
($g=1,2$ and $c = i,j,k$), are denoted by the black dots.

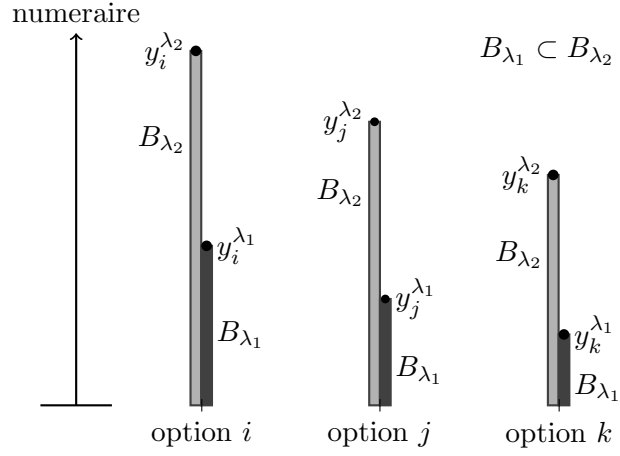
\begin{figure}[ht]
\center
\begin{tikzpicture}[scale=2.35,domain=0:1]
    \draw [->,thick] (-0.2,0) -- (-0.2,2.1) node [above] {numeraire} ;
   \draw [-,thick]  (-0.4,0) -- (0,0);

                       \node[ black ,right] at (2,2) {$B_{\lambda_1} \subset B_{\lambda_2}$} ;

       \filldraw [color=darkgray,thick,fill=darkgray!40,semitransparent]  (0.44,2) --  (0.50,2)--(0.50,0)--(0.44,0)--cycle;
       \filldraw [color=darkgray,thick,fill=darkgray!100,semitransparent]  (0.50,0.9) --  (0.56,0.9)--(0.56,0)--(0.50,0)--cycle;

       \filldraw [color=darkgray,thick,fill=darkgray!40,semitransparent]  (1.44,1.6) --  (1.50,1.6)--(1.50,0)--(1.44,0)--cycle;
       \filldraw [color=darkgray,thick,fill=darkgray!100,semitransparent]  (1.50,0.6) --  (1.56,0.6)--(1.56,0)--(1.50,0)--cycle;
            
       \filldraw [color=darkgray,thick,fill=darkgray!40,semitransparent]  (2.44,1.3) --  (2.50,1.3)--(2.50,0)--(2.44,0)--cycle;
       \filldraw [color=darkgray,thick,fill=darkgray!100,semitransparent]  (2.50,0.4) --  (2.56,0.4)--(2.56,0)--(2.50,0)--cycle;
    
               \node[ black ,right] at (0.53,0.4) {$B_{\lambda_1}$} ;
              \node[ black ,right] at (1.53,0.2) {$B_{\lambda_1}$} ;
               \node[ black ,right] at (2.53,0.1) {$B_{\lambda_1}$} ;

                      \node[ black,left] at (0.47,1.5) {$B_{\lambda_2}$} ;
                      \node[ black,left] at (1.47,1.2) {$B_{\lambda_2}$} ;
                      \node[ black,left] at (2.47,0.85 ) {$B_{\lambda_2}$} ;

 \fill[black] (0.47,2) circle (0.85pt) node [left,color = black] {$y^{\lambda_2}_{i}$} ;
 \fill[black]  (1.47,1.6) circle (0.7pt) node [left,color = black] {$y^{\lambda_2}_{j}$} ;
 \fill[black]  (2.47,1.3) circle (0.85pt) node [left,color = black] {$y^{\lambda_2}_{k}$}  ;
 
  \fill[black] (0.53,0.9) circle (0.85pt) node [right,color = black] {$y^{\lambda_1}_{i}$} ;
 \fill[black]  (1.53,0.6) circle (0.7pt) node [right,color = black] {$y^{\lambda_1}_{j}$} ;
 \fill[black]  (2.53,0.4) circle (0.85pt) node [right,color = black] {$y^{\lambda_1}_{k}$}  ;
   
    \draw [black](0.503,1pt) -- (0.503,-1pt) node[anchor=north] {option $i$};
    \draw [black](1.5,1pt) -- (1.5,-1pt) node[anchor=north] {option $j$};
    \draw [black](2.503,1pt) -- (2.503,-1pt) node[anchor=north] {option $k$};

\end{tikzpicture}
\caption{A graphical illustration of a family of nested opportunity sets in a discrete choice context. The  opportunity set $B_{\lambda_1}$ consists of the three dark gray spikes. $B_{\lambda_2}$ consists of the three light gray spikes.}
\label{fig:NOSets_discrete}
\end{figure}

It is often more convenient to characterize the opportunity sets in terms of virtual prices $\widetilde{p}_c(\lambda)\colonequals y - y^\lambda_c$ instead of the upper bounds $y^\lambda_c$. 
In particular, we have that
\begin{equation}\label{eq:nested_budget2}
    B_\lambda \colonequals \left\{ (y',c) \mid  c\in\mathcal{C}, y' \in\real,y' \leq y - \widetilde{p}_c(\lambda)\right\}.\footnote{It might be surprising that the --- individual independent --- opportunity sets $B_\lambda$ are linked with virtual prices $\tilde p_c(\lambda)$, which depend on individual incomes. However, in our discrete context, prices and incomes are only determined up to an additive constant as only $y-p_c$ enters the utility function, not $y$ nor $p_c$ separately. Hence, also $\tilde p_c(\lambda)$ must be defined such that the relevant concept $y- \tilde p_c(\lambda) $ is individual independent. Therefore $\tilde p_c(\lambda)$ itself must depend on individual income $y$.} 
\end{equation}
We denote the vector of virtual prices as follows: $\widetilde{\mathbf{p}}(\lambda) = \big(\widetilde p_1(\lambda),\ldots,\widetilde p_n(\lambda)\big)$. As $\by^\lambda$ is increasing in $\lambda$, in the sense of Equation~\eqref{eq:increasing}, $\widetilde{\bp}(\lambda)$ is decreasing in $\lambda$ in the same way. Moreover, $\lambda \mapsto \widetilde{\bp}(\lambda)$ is continuous by \eqref{eq:continuity},  $\sup_{\lambda \in \Lambda} \widetilde{p}_{c'}(\lambda) = +\infty$ for all  $c'$ by \eqref{eq:infzero} and $\inf_{\lambda \in \Lambda} \widetilde{p}_c(\lambda) = -\infty$ for at least one  $c$ by \eqref{eq:supinfty}. The fact that those virtual prices can become negative might seem odd at first. However, in a discrete choice context, one can always redefine prices and exogenous income by increasing both by an equal amount of the numeraire. As a result, negative prices can be converted into positive prices.

\paragraph{Welfare measures in DC-RUMs.}
In the continuous setting, the NOS welfare measure evaluated in a bundle was defined as the largest value of $\lambda$ ---or equivalently, the largest opportunity set $B_\lambda$--- such that this bundle was weakly preferred over all bundles in $B_\lambda$. The same idea can be applied to a setting where choice is discrete. More precisely, we define a NOS welfare measure as
\begin{equation}\label{eq:def_NOSWM_1}
    W^\eta(y-p_k,k) = \sup\Big\{\lambda \in \Lambda \mid U^{\eta}_{k}(y-p_k) \geq \max_{(y',c)\in B_\lambda}U^{\eta}_{c}(y')\Big\},
\end{equation}
that is, the largest value of $\lambda$ such that option $k$ is weakly preferred over all bundles in $B_\lambda$. Note that the dependence on the preference type $\eta$ implies that this welfare measure is a random variable.
According to Assumption~\ref{assumption1}, the utility function is strictly increasing in the numeraire, which allows us to restate this definition in terms of the upper bound of the opportunity sets. Formally, we have that
\begin{equation}\label{eq:def_NOSWM_1b}
\begin{split}
W^\eta(y-p_k,k) &= \sup \Big\{\lambda \in \Lambda \mid U^{\eta}_{k}(y-p_k) \geq \max_{(y',c)\in B_\lambda}U^{\eta}_{c}(y')\Big\}\\
&= \sup \Big\{\lambda \in \Lambda \mid U^{\eta}_{k}(y-p_k) \geq \max_{c } \max_{y' \leq  y^\lambda_c}U^{\eta}_{c}(y')\Big\}\\
&=\sup \Big\{\lambda \in \Lambda \mid U^{\eta}_{k}(y-p_k) \geq \max_{c}U^{\eta}_{c}(y^\lambda_c)\Big\}.
\end{split}
\end{equation}
This expression highlights that the value of the welfare measure only depends on the upper bound of the opportunity sets. Furthermore, by conditions~\eqref{eq:infzero} and~\eqref{eq:supinfty} and Assumption~(R1),  there exists (i) a  $\lambda_{\min  }\in \Lambda$ such that $U^{\eta}_{k}(y-p_k) \geq \max_c U^{\eta}_{c}(y^{\lambda_{\min}}_c)$, and (ii) a $\lambda_{\max  }\in \Lambda$ such that $U^{\eta}_{k}(y-p_k) < \max_c U^{\eta}_{c}(y^{\lambda_{\max}}_c)$. This implies that  the set $\Big\{\lambda \in \Lambda \mid U^{\eta}_{k}(y-p_k) \geq \max_{c}U^{\eta}_{c}(y^\lambda_c)\Big\}$ is not empty by (i), and bounded by (ii). Moreover, by continuity of the utility function and of the function $\lambda \mapsto \by^\lambda$, $\lambda \mapsto \max_c U^\eta_c(y^\lambda_c)$ is also continuous, which implies, together with the closedness of $\Lambda$, that (iii) $\Big\{\lambda \in \Lambda \mid U^{\eta}_{k}(y-p_k) \geq \max_{c}U^{\eta}_{c}(y^\lambda_c)\Big\}$ is closed. As this set is not empty, bounded and closed, one can conclude that the suprema in Equations \eqref{eq:def_NOSWM_1} and \eqref{eq:def_NOSWM_1b} are in fact attained and can be replaced by maxima.

Equivalently, when opportunity sets are characterized in terms of virtual prices, we can write that
\begin{equation}\label{eq:def_NOSWM_2}
	W^\eta(y-p_k,k) = \max \Big\{\lambda \in \Lambda \mid U^{\eta}_{k}(y-p_k) \geq \max_{c}U^{\eta}_{c}(y - \widetilde{p}_c(\lambda))\Big\}.
\end{equation}
For notational convenience, the characterization in terms of virtual prices $\widetilde{p}_c(\lambda)$ instead of the numeraire $y^\lambda_c$ will be used in the remainder of the paper.

A common feature of all NOS~measures is that statements on welfare levels, namely `type~$\eta$'s welfare obtained from a bundle $\left(y-p_k,k\right)$ is at least equal to $w$',  can be related to statements on optimal choice behavior, namely `$k$ is $\eta$'s best choice from a virtual opportunity set including all options~$c$ available at (welfare measure specific) virtual prices~$\widetilde{p}_c\left(w\right)$, and $k$ at the actual price, $p_k$.'\footnote{Notice that the welfare level of option $k$ at price $p_k$, $W^\eta(y-p_k,k)$, can never be greater or equal to~$w$ if $\widetilde{p}_k(w)<p_k$. The set of preference types defined in Equation~\eqref{eq:lem1} will be empty in this case.\label{note:maxwelfare}} 
This feature is key for all future results, and is made precise in Lemma~\ref{lem:characterisation_WB}.

\begin{lemma}\label{lem:characterisation_WB}
\begin{equation}\label{eq:lem1}
\big\{ \eta \mid w \leq W^\eta(y-p_k,k)\big\} = \big\{\eta \mid  U^\eta_k(y-p_k) \geq \max_{c} U^\eta_c(y-\widetilde{p}_c(w))\big\}
\end{equation} 
\end{lemma}
\begin{proof}
Take an arbitrary $\eta \in \mathcal{H}$ such that $w \leq W^\eta(y-p_k,k)$. Then there exists a $\lambda \geq w$ such that $U^\eta_k(y-p_k) \geq \max_{c} U^\eta_c(y-\widetilde{p}_c(\lambda))$. As $\lambda \geq w$, $\widetilde{p}_c(\lambda) \leq \widetilde{p}_c(w)$ for all $c\in \C$ and, hence,
$ \max_{c} U^\eta_c(y-\widetilde{p}_c(\lambda))\geq \max_{c} U^\eta_c(y-\widetilde{p}_c(w)),$ because $U^\eta_c$ is an increasing function by Assumption \ref{assumption1}. It follows that $U^\eta_k(y-p_k) \geq \underset{c}{\max}\, U^\eta_c(y-\widetilde{p}_c(w))$. The other inclusion follows immediately from the definition of $W^\eta(y-p_k,k)$.
\end{proof}
This equivalence is obtained without imposing any assumption on preferences besides~Assumptions~\ref{assumption1} and~\ref{assumption2}, and is, therefore, entirely nonparametric. As will be shown below, its main practical implication is that the entire distribution of objects based on NOS measures can be obtained by only evaluating choice and transition probabilities at virtual prices. This entails that these distributions can be identified from cross-sectional and panel data in a nonparametric way.

The illustration in Figure~\ref{fig:NOS_discrete} builds further on Figure~\ref{fig:NOSets_discrete}. For each option $i,j$, and $k$, the amount of the numeraire, $y-p_i$, $y-p_j$, and $y-p_k$ is shown on the vertical axis. The black circles indicate the indifference set of the point $\left(y-p_k,k\right)$. 
The figure shows how to calculate the NOS~welfare measure for option~$k$. The welfare measure is defined by the nested opportunity sets $B_{\lambda \in \Lambda}$ shown in gray in the figure. It is clear that $\lambda_2$ is the maximizer of Equation~\eqref{eq:def_NOSWM_1} because the black dot of $y^{\lambda_2}_{j}$ coincides with the black circle at position~$j$. This means that $U^\eta_{j}(y^{\lambda_2}_{j}) = U^\eta_k(y-p_k)$, and hence $W^\eta(y-p_k,k) = \lambda_2$.

\begin{figure}[ht]
\center
\begin{tikzpicture}[scale=2.35,domain=0:1]
    \draw [->,thick] (-0.2,0) -- (-0.2,2.27) node [above] {numeraire} ;
   \draw [-,thick]  (-0.4,0) -- (0,0);

       \filldraw [color=darkgray,thick,fill=darkgray!40,semitransparent]  (0.44,2) --  (0.50,2)--(0.50,0)--(0.44,0)--cycle;
       \filldraw [color=darkgray,thick,fill=darkgray!100,semitransparent]  (0.50,1) --  (0.56,1)--(0.56,0)--(0.50,0)--cycle;

       \filldraw [color=darkgray,thick,fill=darkgray!40,semitransparent]  (1.74,1.45) --  (1.80,1.45)--(1.80,0)--(1.74,0)--cycle;
       \filldraw [color=darkgray,thick,fill=darkgray!100,semitransparent]  (1.80,0.8) --  (1.86,0.8)--(1.86,0)--(1.80,0)--cycle;
            
        \filldraw [color=darkgray,thick,fill=darkgray!40,semitransparent]  (3.04,1.15) --  (3.10,1.15)--(3.10,0)--(3.04,0)--cycle;
       \filldraw [color=darkgray,thick,fill=darkgray!100,semitransparent]  (3.10,0.6) --  (3.16,0.6)--(3.16,0)--(3.10,0)--cycle;

                      \node[ black,right] at (0.53,0.55) {$B_{\lambda_1}$} ;
                      \node[ black,right] at (1.83,0.4) {$B_{\lambda_1}$} ;
                      \node[ black,right] at (3.13,0.25  ) {$B_{\lambda_1}$} ;
                           \node[ black,left] at (0.44,1.35) {$B_{\lambda_2}$} ;
                       \node[ black,left] at (1.74,1.05) {$B_{\lambda_2}$} ;
                       \node[ black,left] at (3.04,0.8) {$B_{\lambda_2}$} ;

 \fill[black] (0.47,2) circle (0.85pt) node [left,color = black] {$y^{\lambda_2}_{i}$} ;
 \fill[black]  (1.77,1.45) circle (0.7pt) node [left,color = black] {$y^{\lambda_2}_{j}$} ;
 \fill[black]  (3.07,1.15) circle (0.85pt) node [left,color = black] {$y^{\lambda_2}_{k}$}  ;
 
     \draw[black,very thick] (0.5,2.2) circle (1.2pt);
     \draw[black,very thick] (1.8,1.45) circle (1.2pt);
     \draw[black,very thick] (3.1,1.3) circle (1.2pt);
 
   \fill[black] (0.44,1.62) rectangle (0.56,1.68) node[above right , black] { $y-p_i$};
      \fill[black] (1.74,1.87) rectangle (1.86,1.93) node[above right , black] { $y-p_j$};
   \fill[black] (3.06,1.282) rectangle (3.14,1.318) node[above right , black] { $y-p_k$};

    \draw [black](0.503,1pt) -- (0.503,-1pt) node[anchor=north] {option $i$};
    \draw [black](1.8,1pt) -- (1.8,-1pt) node[anchor=north] {option $j$};
    \draw [black](3.103,1pt) -- (3.103,-1pt) node[anchor=north] {option $k$};


 
\end{tikzpicture}
\caption[A graphical illustration of a NOS welfare measure in a discrete choice context.]{A graphical illustration of a NOS welfare measure in a discrete choice context.\footnotemark ~The black circles denote the indifference set of option $k$ with residual numeraire $y-p_k$. The  opportunity set $B_{\lambda_1}$ consists of the three dark gray spikes. $B_{\lambda_2}$ consists of the three light gray spikes. Black bars indicate the actual choice set.}
\label{fig:NOS_discrete}
\end{figure}
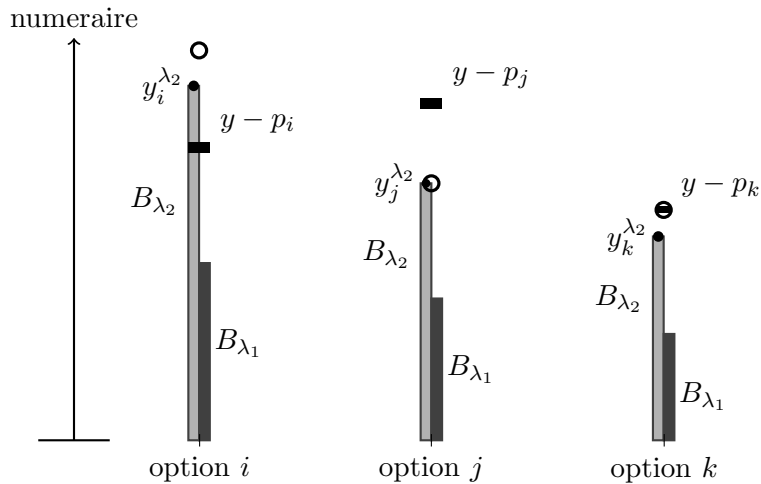
\footnotetext{Note that option~$k$ is not the chosen option as the bundle $(y-p_j,j)$ is situated above the indifference set of $(y-p_k,k)$ indicated by the black circles. However, in our approach, we also allow to calculate welfare in a non-optimal bundle.}

    Each member of the class of NOS~measures is characterized by a different family of nested budget sets. All NOS~measures agree on the welfare ranking of different bundles for a given  individual, as within-individual welfare rankings coincide with the individual's preference ordering over these bundles. Given that individual preferences are respected by all NOS~measures, they will also all agree whether a change in prices causes a welfare gain or loss for a given individual. They may disagree, however, on the making of interpersonal welfare comparisons.\footnote{The choice of the precise welfare metric is a normative choice. This remains true for the nonparametric approach advanced in this paper.}
    
    Moreover, the size of the gain or loss is not only dependent on the particular NOS~welfare measure, but also depends on the particular cardinalization of the size of the opportunity sets associated with that measure.\footnote{\cite{piacquadio17} proposes to choose the jointly least concave representation of this size.} The way the size of the opportunity sets of a given family (and thus of a given NOS~welfare measure) is measured, denoted by the parameter~$\lambda$, is only unique up to a strictly positive monotone transformation. The welfare ranking and interpersonal comparison of individuals by a given measure is not affected by the particular cardinalization of the sizes of the opportunity sets belonging to its associated family. If one is not willing to give any meaning to this cardinalization, only the sign of welfare differences is meaningful.\footnote{An exception is when two individuals are equally well-off before the price change. Who is the biggest gainer (or loser) in that case is again independent of the cardinalization, but may differ across NOS~measures.} Note that this is also true for measures of changes in welfare based on the MMU, like the CV and EV.

\paragraph{Examples of NOS~measures: the MMU~class.}
Let $\Lambda = \real$, fix a set of $n$~reference prices $\bp^{ref}$, one for each option, and let the upper bound of the opportunity sets be $y^\lambda_c = \lambda - p_c^{ref}$, or equivalently, $\widetilde{p}_c(\lambda) = y - \lambda + p_c^{ref}$. The crucial property is that the upper bounds increase by the same amount for every option: i.e., $y^{\lambda_1}_c - y^{\lambda_2}_c = \lambda_1 - \lambda_2$ for all $c \in \C$. The MMU evaluated in option $k$ (with price $p_k$) at reference prices $\bp^{ref}$ is then defined as
\begin{equation}\label{eq:MMUexpl}
\begin{array}{lcl}
W^\eta_{M(\bp^{ref})}\left(y-p_k,k\right) &=& {\max} \Big\{\lambda \in \real \mid U^{\eta}_{k}(y-p_k) \geq \underset{c}{\max}\,  U^{\eta}_{c}\left(y -(y -\lambda+p^{ref}_c)\right)\Big\}.
\end{array}
\end{equation}
This can also be defined implicitly as
\begin{equation}\label{eq:MMU}
U^\eta_k\left(y-p_k\right)=\max_{c}
U^\eta_c\left(W^\eta_{M(\bp^{ref})}\left(y-p_k,k\right)-p^{ref}_c\right).    
\end{equation}
Similar to the continuous case, this highlights the equivalence of the MMUs with the expenditure function representation of preferences, as each of them  evaluates the expenditure function at a given set of reference prices.

From Equation~\eqref{eq:MMUexpl}, it can be seen that $W^\eta_{M(\bp)}\left(y-p_k,k\right)=y$ if $k=J^\eta\left(\bp,y\right)$. When the reference prices coincide with the actual prices, the level of well-being according to the MMU of the optimal choice in the actual situation is equal to the actual amount of the numeraire (see also Corollary~\ref{cor:MMU} below).

\section{Distribution of welfare levels, welfare differences, and social welfare}
\label{sec:distribution}

As discussed before, the presence of unobserved preference heterogeneity entails that NOS welfare measures are random variables from the point of view of the econometrician. This randomness can be interpreted in two distinct ways. In the first interpretation, as the econometrician does not observe an individual's preference type, they can only derive the distribution of welfare for this particular individual and not its exact realization. That is, the distribution reflects uncertainty for the econometrician. In the second interpretation, an observed individual in the sample represents the class of individuals in the population that share the same observable characteristics. In this case, the distribution reflects inequality in welfare among the members of this class. Our theoretical results are valid for both interpretations.

For notational convenience, we will present all our expressions in terms of the complementary cumulative distribution function (CCDF) instead of the more common cumulative distribution function (i.e., $1-F(x)$ for a CDF $F$).\footnote{The CCDF is also known as the survival or reliability function.} The proofs are collected in the Appendix.

\subsection{Distribution of the NOS welfare measures}\label{ssec:distribution}
In this section the marginal distribution for the NOS measures is derived in terms of choice probabilities.\footnote{Similar to \cite{bhattacharyaNonparametricWelfareAnalysis2015}, it is important to stress that identification generally fails in settings where the prices of alternatives are multiples of one another, such as in ordered choice. In such settings, there is no relative price variation in the data that identifies the effect of a price change in some alternative(s) while keeping the prices of the other alternatives fixed.} We also provide distributional results joint with, and conditional on, the optimal observed choice.

Under Assumptions \ref{assumption1}--\ref{assumption3}, which were introduced in Section \ref{sec:formal_definitions}, we can prove the following theorem.
\begin{thmrep}\label{thm:distribution_arbitrary}
The joint distribution of the NOS welfare measure $W$, evaluated in an option $k$ with price $p_k$, and choosing $j$ at prices $\bp'$ and exogenous income $y$ can be expressed in terms of transition probabilities as follows:
\begin{equation}\label{eq:WB_distribution_arbitrary}
\Pr_\eta\big[w \leq W^\eta(y-p_k,k), j = J^\eta(\bp',y)\big]  =  P_{j,k}\Big(\mathbf{p}', \left(p_k,\widetilde{\mathbf{p}}_{-k}(w)\right), y\Big) \mathbb{I}\left[p_k \leq \widetilde p_k(w)\right],
\end{equation}
where $\left(p_k,\widetilde{\mathbf{p}}_{-k}(w)\right) = \left(\widetilde{p}_{1}(w),\ldots,\widetilde{p}_{k-1}(w),p_k,\widetilde{p}_{k+1}(w),\ldots,\widetilde{p}_{n}(w)\right)$.
\end{thmrep}

\begin{proofEnd}
Using Lemma~\ref{lem:characterisation_WB}, we have that
\begin{equation*}
\begin{split}
& \Pr_\eta\big[  w \leq W^\eta\left(y-p_k,k\right) ,  j= J^\eta(\mathbf p',y) \big]  \\
&=  \Pr_\eta\left[U^\eta_k(y-p_k) \geq \max_{c} U^\eta_c(y-\widetilde{p}_c(w)) , U^\eta_j(y-p'_j) \geq \max_{c'\neq j} U^\eta_{c'}(y-p'_{c'}) \right] \\
&= \Pr_\eta\left[U^\eta_k(y-p_k) \geq \max_{c\neq k} U^\eta_c(y-\widetilde{p}_c(w)) , U^\eta_j(y-p'_j) \geq \max_{c' \neq j} U^\eta_{c'}(y-p'_{c'}) \right]\mathbb{I}\left[p_k \leq  \widetilde p_k(w)\right] \\
&=  P_{j,k}\Big(\mathbf{p}', \left(p_k,\widetilde{\mathbf{p}}_{-k}(w)\right), y\Big) \mathbb{I}\left[p_k \leq \widetilde p_k(w)\right].
\end{split}
\end{equation*}
\end{proofEnd}

The theorem shows that the joint distribution can be expressed in terms of transition probabilities, evaluated at $p_k$, actual prices $\bp'$, and virtual prices~$\widetilde{\bp}$. The crucial insight here is that, by Lemma~\ref{lem:characterisation_WB}, the event $W^\eta(y-p_k,k) \geq w$ corresponds to the condition that $k$ is optimal under virtual prices. Consequently, the joint distribution in Equation~\eqref{eq:WB_distribution_arbitrary} can be written as
\begin{equation}
    \begin{split}
        &\Pr_\eta\left[  w \leq W^\eta\left(y-p_k,k\right) ,  j= J^\eta(\mathbf p',y) \right] \\
        &\quad = \int_\mathcal{H} \mathbb{I}[w \leq W^\eta\left(y-p_k,k\right)] \mathbb{I}\left[j= J^\eta(\mathbf p',y)\right] dF(\eta) \\
        &\quad = \int_\mathcal{H} \mathbb{I}\left[U^\eta_k(y-p_k) \geq \max_{c} U^\eta_c(y-\widetilde{p}_c(w))\right] \mathbb{I}\left[j= J^\eta(\mathbf p',y)\right] dF(\eta) \\
        &\quad = \mathbb{I}\left[p_k \leq \widetilde p_k(w)\right] \int_\mathcal{H} \mathbb{I}\left[U^\eta_k(y-p_k) \geq \max_{c \neq k} U^\eta_c(y-\widetilde{p}_c(w))\right] \mathbb{I}\left[j= J^\eta(\mathbf p',y)\right] dF(\eta) \\
        &\quad = \mathbb{I}\left[p_k \leq \widetilde p_k(w)\right] \int_\mathcal{H} \mathbb{I}\left[k= J^\eta(\left(p_k,\widetilde{\mathbf{p}}_{-k}(w)\right),y)\right] \mathbb{I}\left[j= J^\eta(\mathbf p',y)\right] dF(\eta) \\
        &\quad =  \mathbb{I}\left[p_k \leq \widetilde p_k(w)\right] P_{j,k}\Big(\mathbf{p}', \left(p_k,\widetilde{\mathbf{p}}_{-k}(w)\right), y\Big),
    \end{split}
\end{equation}
where the second equality follows from Lemma~\ref{lem:characterisation_WB}, the fourth equality from Assumption~\ref{assumption3}, and the fifth equality from the definition of the transition probabilities in Equation~\eqref{eq:transprob}.  The occurrence of the condition that the actual price of option $k$ should not be higher than the virtual price at welfare level~$w$ in the third equality can be understood from Footnote~\ref{note:maxwelfare}.

Theorem~\ref{thm:distribution_arbitrary} is formulated in the most general form; it considers a joint distribution, and not a marginal nor a conditional, and allows the price at which the welfare in alternative~$k$ is evaluated,~$p_k$, to be different from the actual prices~$\bp'$. For example, if one wants to evaluate welfare levels after a price change from~$\bp'$ to~$\bp$ when only information on choices before the price changed is available,~$\bp'$ and~$\bp$ will typically not coincide. However, in a setting where one wants to evaluate welfare at actual prices~$\bp'$, then $\bp$ equals $\bp'$. Usually, one wants to evaluate welfare in an optimal bundle; then~$k$ can be set equal to~$j$ in Equation~\eqref{eq:WB_distribution_arbitrary} (and $p_k$ equal to $p'_k$). In Corollary~\ref{cor:WB_derived_distribution} below, we will derive some related distributions which are more directly relevant for applied work.

The exact specification of $\widetilde{\bp}(w)$ depends, as explained in Section~\ref{ssec:metric_discrete}, on the specific choice of the welfare measure. Nonetheless, we can give some intuition on the role of $p_k$ and the overall course of the distribution of welfare. We know that the lower the price $p_k$, the higher is the residual numeraire $y-p_k$ in option $k$ and hence, the more the indifference set containing $(y-p_k,k)$ is shifted upwards in the numeraire dimension. A higher indifference set implies higher welfare, and therefore, the lower price $p_k$, the higher is the CCDF of welfare in option $k$ and the more the distribution of welfare is shifted to the right. 

Now, we examine the overall course of the CCDF in more detail by considering a typical plot of the CCDF for fixed prices $p_k$ and $\bp'$ in Figure~\ref{fig:CCDF}. When $w$ is negative and large in absolute value, the $\tilde p_{c}(w)$ are large (and positive). Hence $p_k \leq \tilde p_k(w)$ and the CCDF approaches $P_j(\bp')$ as expected. As $w$ increases, $\widetilde{\bp}_{-k}(w)$ decreases and the probability of choosing $k$ at prices $(p_k,\widetilde{\bp}_{-k}(w))$ decreases. Therefore, $\Pr_\eta\left[  w \leq W^\eta\left(y-p_k,k\right) ,  j= J^\eta(\mathbf p,y) \right] $ decreases until $w$ reaches its highest value, $w^*_k$,  where $p_k = \tilde p_k(w^*_k)$. There the CCDF drops to zero discontinuously, as the indicator function becomes zero. This means that $w^*_k$ is an upper bound for welfare and that the probability distribution has a mass point.

\begin{figure}[ht]
\center
\begin{subfigure}{0.5\textwidth}
\begin{tikzpicture}[scale=1.5,domain=0:1]
    \draw [->,thick] (0,0) -- (0,1.2) node [above] {CCDF} ;
    \draw (1pt, 0.78 cm) -- (-1pt, 0.78 cm) node[anchor=east] {$P_j(\bp')$};
   \draw [->,thick]  (0,0) -- (3,0)node [right] {welfare};
   \draw[thick, dashed, domain=0.15:0.3, smooth, variable=\x, darkgray] plot ({\x}, {(3-\x)/(3-\x+1)});
   \draw[thick, domain=0.3:2.5, smooth, variable=\x, darkgray] plot ({\x}, {(3-\x)/(3-\x+1)});
   \draw[thick,dotted, domain=0:1/3, smooth, variable=\y, darkgray] plot ({2.5}, {\y});
   \draw[thick, domain=2.5:3, smooth, variable=\x, darkgray] plot ({\x}, {0});
   
    \draw [black](2.5,1pt) -- (2.5,-1pt) node[anchor=north] {$w_k^*$};
\end{tikzpicture}
\caption{Unconditional CCDF}
\label{fig:CCDF}
\end{subfigure}%
\begin{subfigure}{0.5\textwidth}
    \center
\begin{tikzpicture}[scale=1.5,domain=0:1]
    \draw [->,thick] (0,0) -- (0,1.2) node [above] {CCDF} ;
    \draw (1pt, 1 ) -- (-1pt, 1) node[anchor=east] {1};
   \draw [->,thick]  (0,0) -- (3,0)node [right] {welfare};
   \draw[thick, dashed,domain=0.15:0.4, smooth, variable=\x, darkgray] plot ({\x}, {2*(3-1)/(3-1+1)-1/3});
      \draw[thick, domain=0.4:1, smooth, variable=\x, darkgray] plot ({\x}, {2*(3-1)/(3-1+1)-1/3});
   \draw[thick, domain=1:2.5, smooth, variable=\x, darkgray] plot ({\x}, {2*(3-\x)/(3-\x+1)-1/3});
   \draw[thick,dotted, domain=0:1/3, smooth, variable=\y, darkgray] plot ({2.5}, {\y});
   \draw[thick, domain=2.5:3, smooth, variable=\x, darkgray] plot ({\x}, {0});
   
    \draw [black](1,1pt) -- (1,-1pt) node[anchor=north] {$w^*$};
     \draw [black](2.5,1pt) -- (2.5,-1pt) node[anchor=north] {$w_k^*$};
\end{tikzpicture}
\caption{Conditional CCDF}
\label{fig:CCDF_conditonal}
\end{subfigure}
\caption{The course of the (un)conditional CCDF of welfare}
\end{figure}
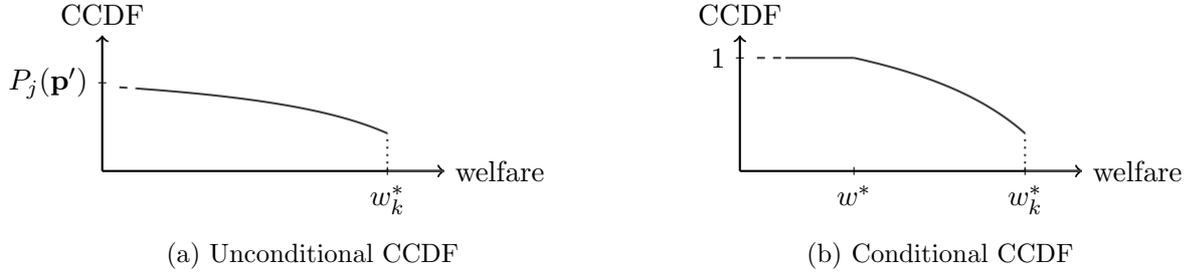

We can derive some associated distributions, such as the conditional and marginal CCDFs, which are more relevant in empirical applications. 
\begin{corollaryrep}\label{cor:WB_derived_distribution} 
\begin{equation}\label{eq:WB_conditional_distribution}
\Pr_\eta\Big[w \leq W^\eta\left(y-p_k,k\right)  \mid j = J^\eta(\mathbf p',y) \Big]  =  \frac{P_{j,k}\Big(\mathbf{p}', \left(p_k,\widetilde{\mathbf{p}}_{-k}(w)\right), y\Big) }{P_{j}\left(\mathbf{p}',y\right)} \mathbb{I}\left[p_k \leq \widetilde p_k(w)\right],
\end{equation}
\begin{equation}\label{eq:WB_conditional_distribution_chosen}
\Pr_\eta\Big[w \leq W^\eta\left(y-p_k, k\right)  \mid k = J^\eta(\mathbf p,y) \Big]  =  \frac{P_{k}\Big(\bmin\big( \mathbf{p}, \widetilde{\mathbf{p}}(w)\big), y\Big) }{P_{k}\left(\mathbf{p},y\right)} \mathbb{I}\left[p_k \leq \widetilde p_k(w)\right],
\end{equation}
where $\bmin\big( \mathbf{p}, \widetilde{\mathbf{p}}(w)\big) = \big( \min(p_1,  \widetilde{p}_1(w)),\ldots,\min(p_n, \widetilde{p}_n(w))\big),$ 
\begin{equation}\label{eq:WB_marginal_distribution}
\Pr_\eta\big[w \leq W^\eta\left(y-p_k,k\right)\big]  =  P_{k}\Big(\left(p_k,\widetilde{\mathbf{p}}_{-k}(w)\right),y\Big) \mathbb{I}\left[p_k \leq \widetilde p_k(w)\right],
\end{equation}
and 
\begin{equation}\label{eq:WB_marginal_distribution_chosen}
\Pr_\eta\Big[w \leq W^\eta\left(y-p_{J^\eta(\mathbf p,y)}, J^\eta(\mathbf p,y)\right)  \Big]  =  \sum_k P_{k}\Big(\bmin\big(\mathbf{p}, \widetilde{\mathbf{p}}(w)\big), y\Big) \mathbb{I}\left[p_k \leq \widetilde p_k(w)\right].
\end{equation}
\end{corollaryrep}

\begin{proofEnd}

Equations \eqref{eq:WB_conditional_distribution} and \eqref{eq:WB_marginal_distribution} follow directly from Theorem \ref{thm:distribution_arbitrary} using the definitions of conditional and marginal distributions respectively. Equation \eqref{eq:WB_marginal_distribution_chosen} follows analogously from Equation \eqref{eq:WB_conditional_distribution_chosen}. Therefore, we only prove Equation \eqref{eq:WB_conditional_distribution_chosen}. We have
\begin{equation*}
\begin{split}
&\Pr_\eta\Big[w \leq W^\eta(y-p_k,k)\mid  k= J^\eta(\mathbf p,y)\Big] \\
&=  \frac{\Pr_\eta\Big[w \leq W^\eta(y-p_k,k),  k= J^\eta(\mathbf p,y)\Big] }{ P_{k}\left(\mathbf{p},y\right)} \\
&= \frac{P_{k,k}\Big(\mathbf{p}, \left(p_k,\widetilde{\mathbf{p}}_{-k}(w)\right), y\Big)  \mathbb{I}\left[p_k \leq \widetilde p_k(w)\right]}{ P_{k}\left(\mathbf{p},y\right)}\\
&= \frac{\Pr_\eta\Big[U^\eta_k(y-p_k) \geq \max_{c\neq k}U^\eta_c(y-p_c),U^\eta_k(y-p_k) \geq \max_{c\neq k}U^\eta_c(y-\widetilde{p}_c(w) \Big]}{P_{k}\left(\mathbf{p},y\right)} \mathbb{I}\left[p_k \leq \widetilde p_k(w)\right]\\
&= \frac{\Pr_\eta\left[U^\eta_k(y-p_k) \geq \max_{c\neq k}U^\eta_c\big(y-\min(p_c,\widetilde{p}_c(w) )\big)\right]}{P_{k}\left(\mathbf{p},y\right)}\mathbb{I}\left[p_k \leq \widetilde p_k(w)\right]\\
&= \frac{ P_{k}\Big(\bmin\big(\mathbf{p}, \left(p_k,\widetilde{\mathbf{p}}_{-k}(w)\right)\big), y\Big)}{ P_{k}\left(\mathbf{p},y\right)}\mathbb{I}\left[p_k \leq \widetilde p_k(w)\right]\\
&=  \frac{P_{k}\Big(\bmin\big(\mathbf{p}, \widetilde{\mathbf{p}}(w)\big), y\Big) }{ P_{k}\left(\mathbf{p},y\right)}  \mathbb{I}\left[p_k \leq \widetilde p_k(w)\right].\\
\end{split}
\end{equation*}
\end{proofEnd}

We find again that the different derived distributions can be expressed in terms of choice and transition probabilities. Equations \eqref{eq:WB_conditional_distribution} and \eqref{eq:WB_conditional_distribution_chosen} can be used to assess the distribution of welfare when the researcher observes which bundle is optimal and wants to take this information into account. Equation \eqref{eq:WB_marginal_distribution} describes the marginal distribution of welfare evaluated in a specific bundle, not taking into account which bundle is optimal. Finally, Equation \eqref{eq:WB_marginal_distribution_chosen} specializes this result to a setting where welfare is evaluated in the optimal bundle.

A typical example of the distribution of welfare in bundle $k$ conditional on bundle $k$ being optimal is plotted in Figure \ref{fig:CCDF_conditonal}. As before, define $w^*_c$ to be the highest value of $w$ for an option $c$, such that $p_c = \tilde p_c(w)$, and let $w^*$ be $\min_{c} \{w^*_c\}$. Then we observe that for $w\leq w^*$, $p_c \leq \widetilde{p}_c(w)$ for all $c$, and hence, $\bmin(\bp,\widetilde{\bp}(w)) = \bp$. It follows that $\Pr_\eta\Big[w \leq W^\eta(y-p_k,k)\mid  k= J^\eta(\mathbf p,y)\Big] = 1$. Hence, $w^*$ is a lower bound for welfare in option $k$, conditionally on $k$ being optimal. For $w>w^*$, $\Pr_\eta\Big[w \leq W^\eta(y-p_k,k)\mid  k= J^\eta(\mathbf p,y)\Big]$ decreases continuously until $w$ reaches $w^*_k$ where $\Pr_\eta\Big[w \leq W^\eta(y-p_k,k)\mid  k= J^\eta(\mathbf p,y)\Big]$ drops to 0, as seen in Figure \ref{fig:CCDF_conditonal}. Hence, $w^*_k$ is an upper bound for welfare in option $k$, conditional on $k$ being optimal, and the distribution has a mass point at $w^*_k$. If $w^*_k = w^*$, the distribution is thus a step function and, hence, the welfare level in bundle~$k$, conditional on $k$ being optimal at prices~$\mathbf p$ and exogenous income $y$, is deterministic and equals~$w^*_k$. 

\paragraph{Examples of NOS measures: the MMU class.}
When applying Corollary \ref{cor:WB_derived_distribution} to the class of MMU measures, we obtain the following result.

\begin{corollaryrep}\label{cor:MMU}
When using reference prices $\bp^{ref}$, we have
\begin{equation}\label{eq:RW_joint_distribution}
\begin{split}
    &\Pr_\eta\Big[w \leq W^\eta_{M(\bp^{ref})}\left(y-p_k,k\right), j =  J^\eta(\bp',y) \Big]  = \\
    &\qquad P_{j,k}\big(\bp',(p_k,y-w + \bp^{ref}_{-k}),y\big)\mathbb{I}\left[p_k \leq y-w+p^{ref}_k \right].
\end{split}
\end{equation}
When $p_k = p_k'$, and the reference prices equal the actual prices $\bp'$ and $k$ is the optimal choice, this simplifies to
\begin{equation}\label{eq:RW_joint_distribution_chosen}
\Pr_\eta\Big[w \leq W^\eta_{M(\bp')}\left(y-p'_k,k\right)  , k = J^\eta(\mathbf p',y) \Big]  = P_{k}(\bp',y)\mathbb{I}\left[w \leq y\right]
\end{equation}
and, hence,
\begin{align}
    \Pr_\eta\Big[w \leq W^\eta_{M(\bp')}\left(y-p'_k,k\right)  \mid k = J^\eta(\mathbf p',y) \Big]  = \mathbb{I}\left[w \leq y\right],  \\
 \Pr_\eta\Big[w \leq W^\eta_{M(\bp')}\Big(y-p'_{J^\eta(\mathbf p',y)}, J^\eta(\mathbf p',y)\Big)   \Big]= \mathbb{I}\left[w \leq y\right].
\end{align}
\end{corollaryrep}

\begin{proofEnd}
    The first equation follows from plugging $\widetilde{\mathbf{p}}(w) = y-w + \mathbf{p}^{ref}$ into Equation~\eqref{eq:WB_distribution_arbitrary}. Moreover, using actual prices $\bp'$ as reference prices and taking $p_k = p'_k$, 
    $\mathbb{I}\left[p_k \leq y-w+p^{ref}_k \right]$ reduces to $\mathbb{I}\left[ w \leq y\right]$. Therefore,
    \begin{equation}
    \begin{split}
         \Pr_\eta\Big[w \leq W^\eta_{M(\bp')}\left(y-p'_k,k\right)  , k = J^\eta(\mathbf p',y) \Big] &=
     P_{k}\Big(\bmin\big( \mathbf{p}', y- w + \bp'\big), y\Big) \mathbb{I}\left[w \leq y\right]\\
     &=P_{k}(\bp', y) \mathbb{I}\left[w \leq y\right].
    \end{split}
 \end{equation}
     The last two equations then immediately follow from Bayes' theorem and summing over $k$.
\end{proofEnd}
Both the MMU in the optimal bundle and the MMU in bundle $k$, conditional on $k$ being optimal, are, therefore, deterministic and equal the initial exogenous income $y$ when reference equal actual prices.

\subsection{Joint distribution of welfare levels and welfare differences}
In this section, we derive the joint distribution of welfare levels and welfare differences. Joint knowledge on levels and differences of welfare enables investigation of the association between individuals' gains or losses from a price change and their position in terms of initial welfare. A price change is defined as an exogenous shift in prices from $\mathbf{p}$ to $\mathbf{p}'$. As discussed in Section \ref{ssec:DCM}, we will assume throughout that the unobserved preference type $\eta$ is unaltered by the price change.

\subsubsection{Welfare differences in terms of NOS measures} 
We first study the general case in which welfare differences are defined on the basis of changes in NOS welfare measures (evaluated in optimal choices). As an intermediate step, we derive the joint distribution of welfare before and after a price change in Proposition~\ref{prop:jointbeforeafter}.
\begin{propositionrep} 
\label{prop:jointbeforeafter}
The joint distribution of welfare in the optimal bundle $i$, before a price change, and welfare in the optimal bundle $j$, after the price change, is as follows:
\begin{equation}\label{eq:jointwelfare01}
	\begin{split}
&\Pr_\eta[w \leq W^\eta_0(y-p_i,i), z \leq W^\eta_1(y-p'_j, j) , i = J^\eta(\bp,y), j = J^\eta(\bp',y)] \\
&= P_{i,j}\Big( \bmin\big(\bp, \widetilde{\bp}(w)\big),\bmin\big(\bp', \widetilde{\bp}(z)\big),y\Big) \mathbb{I}\left[p_i \leq \widetilde{p}_i(w)\right]\mathbb{I}\left[p'_j \leq  \widetilde{p}_j(z)\right].
\end{split}
\end{equation}
\end{propositionrep}

\begin{proofEnd}
\begin{equation*}
	\begin{split}
	&\Pr_\eta[w \leq W^\eta_0(y-p_i,i), z \leq W^\eta_1(y-p'_j, j) , i = J^\eta(\bp,y), j = J^\eta(\bp',y)] \\
&= \Pr_\eta\Big[U^\eta_i(y-p_i) \geq \max_{c'} U^\eta_{c'}(y-\widetilde{p}_{c'}(w)), \quad U^\eta_i(y - p_i) \geq \max_{k\neq i}  U^\eta_k(y-p_k) ,\\
&	\qquad\qquad  U^\eta_j(y- p'_j) \geq \max_{l\neq j}  U^\eta_l(y-p'_l) , \quad U^\eta_j(y-p'_j) \geq \max_{c} U^\eta_{c}(y-\widetilde{p}_{c}(z)),\Big] \\
&= \Pr_\eta\Big[U^\eta_i(y - p_i) \geq \max_{k\neq i}  U^\eta_k(y-\min(p_k, \widetilde{p}_{k}(w))) , \\
&\qquad \qquad  U^\eta_j(y- p'_j) \geq \max_{l\neq j}  U^\eta_l(y-\min(p'_l, \widetilde{p}_{l}(z))) \Big] \mathbb{I}\left[p_i \leq \widetilde{p}_i(w)\right]\mathbb{I}\left[p'_j \leq  \widetilde{p}_j(z)\right]\\
&= P_{i,j}\Big(\big( p_i,\bmin\big(\bp_{-i}, \widetilde{\bp}_{-i}(w)\big) \big), \big(p_j',\bmin\big(\bp'_{-j}, \widetilde{\bp}_{-j}(z)\big),y\Big)  \mathbb{I}\left[p_i \leq \widetilde{p}_i(w)\right]\mathbb{I}\left[p'_j \leq  \widetilde{p}_j(z)\right]\\
&= P_{i,j}\Big( \bmin\big(\bp, \widetilde{\bp}(w)\big),\bmin\big(\bp', \widetilde{\bp}(z)\big),y\Big)  \mathbb{I}\left[p_i \leq \widetilde{p}_i(w)\right]\mathbb{I}\left[p'_j \leq  \widetilde{p}_j(z)\right].
\end{split}
\end{equation*}
\end{proofEnd}
Proposition~\ref{prop:jointbeforeafter} shows that this joint distribution can be written in terms of transition probabilities, evaluated at initial, final, and virtual prices. Using this proposition, the joint distribution of welfare levels and differences can be derived.
\begin{thmrep}\label{thm:joint_distribution_general_arbitrary}
Let the function $h$ be defined by
\begin{align}
     h_{i,j,\bp,\bp'}(w,x,s) &= P_{i,j}\Big( \bmin\big(\bp, \widetilde{\bp}(\max(w,x))\big),\bmin\big(\bp', \widetilde{\bp}(s)\big),y\Big)  \mathbb{I}\left[p'_j \leq  \widetilde{p}_j(s)\right]\\
     &= P_{i,j}\Big( \bmin\big(\bp, \widetilde{\bp}(w),\widetilde{\bp}(x)\big),\bmin\big(\bp', \widetilde{\bp}(s)\big),y\Big)  \mathbb{I}\left[p'_j \leq  \widetilde{p}_j(s)\right]. 
\end{align}

Then, the joint distribution of the stochastic welfare measure and the difference before and after the price change of this measure becomes,
\begin{equation}\label{joint_distribution_general}
	\begin{split}
&\Pr_\eta[w \leq W^\eta_0(y-p_i,i),  W^\eta_1(y-p'_j, j) -W^\eta_0(y-p_i,i)\leq z, i = J^\eta(\bp,y), j = J^\eta(\bp',y) ]= \\
& -\int_{-\infty}^{+\infty} \partial_{3} h_{i,j,\bp,\bp'}(w,x, x +z) \mathbb{I}\left[p_i \leq \min(\widetilde{p}_i(w),\widetilde{p}_i(x))\right]\,dx.
\end{split}
\end{equation}
\end{thmrep}

\begin{proofEnd}
Fix $i$ and $j$ and define $g(w,z) = \Pr_\eta[w \leq W^\eta_0(y-p_i,i), z \leq W^\eta_1(y-p'_j, j) , i = J^\eta(\bp,y), j = J^\eta(\bp',y)]$. Then we have

\begin{equation*}
\begin{split}
&\Pr_\eta[w \leq W^\eta_0(y-p_i,i),  W^\eta_1(y-p'_j, j) -W^\eta_0(y-p_i,i) \leq z, i = J^\eta(\bp,y), j = J^\eta(\bp',y) ] \\
&= - \int_{-\infty}^{+\infty} \partial_2 g\left( \max(w,x),x+z\right)\,dx\\
&= -\int_{-\infty}^{+\infty} \partial_{3} h_{i,j,\bp,\bp'}(w,x, x +z)\mathbb{I}\left[p_i \leq \widetilde{p}_i(\max(w,x))\right] \,dx\\
&= -\int_{-\infty}^{+\infty} \partial_{3} h_{i,j,\bp,\bp'}(w,x, x +z) \mathbb{I}\left[p_i \leq \min(\widetilde{p}_i(w),\widetilde{p}_i(x))\right]\,dx.
\end{split}
\end{equation*}
\end{proofEnd}
Unfortunately, it seems that this expression cannot be simplified. However, even though expression \eqref{joint_distribution_general} seems technically complicated, only the transition probabilities are used as input. This object is nonparametrically identified from panel data.

\subsubsection{Welfare differences in terms of the CV} 
We now specialize our results to the joint distribution of the MMU and the CV, which is a popular choice among applied welfare economists.\footnote{The results below in Theorems~\ref{thm:distribution_CV_initial_final} and~\ref{thm:joint_distribution_CV_arbitrary}, and in Corollaries~\ref{cor:CV} and~\ref{cor:WandCV}, can in fact be seen as applications of Theorem~\ref{thm:joint_distribution_general_arbitrary}. The derivation for the EV is similar and can be found in the Appendix in Section~\ref{app:add_results}.} The CV refers to the (possibly negative) amount of the numeraire an individual wants to give up after a price change to be equally well-off as before this change.

For an individual of type $\eta$, $CV^\eta$  is implicitly defined as 
\begin{equation}\label{eq:def_CV}
	\max_{c} \{U^\eta_c(y-p_c)\} = \max_{c} \{U^\eta_c(y-p_c' - CV^\eta)\},
\end{equation}
where  $\bp$ are initial prices and $\bp'$ final prices. In fact this definition of the CV is equivalent to $W^{\eta}_{M\left(\bp'\right)}(y-p'_{J^\eta(\bp',y)}, J^\eta(\bp',y)) - W^{\eta}_{M\left(\bp'\right)}(y-p_{J^\eta(\bp,y)}, J^\eta(\bp,y))$, i.e., the difference between the MMU with the final prices as reference price vector, in the optimal bundle after the price change, and the same MMU in the optimal bundle before the price change.\footnote{Indeed, defining $CV^\eta$ by $W^{\eta}_{M\left(\bp'\right)}(y-p'_{J^\eta(\bp',y)}, J^\eta(\bp',y)) - W^{\eta}_{M\left(\bp'\right)}(y-p_{J^\eta(\bp,y)}, J^\eta(\bp,y))$, we get $  W^{\eta}_{M\left(\bp'\right)}(y-p_{J^\eta(\bp,y)}, J^\eta(\bp,y)) = y - CV^\eta$ by Corollary \ref{cor:MMU}. Moreover, as $J^\eta(\bp,y)$ is the optimal bundle before the price change, we get
\begin{equation}
\begin{split}
      \max_{c} \{U^\eta_c(y-p_c)\} &=U^\eta_{J^\eta(\bp,y)}(y-p_{J^\eta(\bp,y)}) \\
      &=  \max_{c} \{U^\eta_c(W^{\eta}_{M\left(\bp'\right)}(y-p_{J^\eta(\bp,y)}, J^\eta(\bp,y))-p'_c)\} \\
      &= \max_{c} \{U^\eta_c(y-p_c' - CV^\eta)\}.
\end{split}
\end{equation}
}
Note that the CV for a composition of two or more price changes cannot be calculated from the CV for each price change separately. In our more general approach of measuring a change in welfare by the difference between two valuations of a welfare metric, this problem is inherently nonexistent. 

\paragraph{Distribution of the CV.}
In order to derive the distribution of the CV when the choice is equal to option $i$ under initial prices and option $j$ under final prices, we can follow a similar strategy as \citeauthor{bhattacharyaNonparametricWelfareAnalysis2015}~ (\citeyear{bhattacharyaNonparametricWelfareAnalysis2015}, \citeyear{bhattacharyaEmpiricalWelfareAnalysis2018}) and \cite{depalmaTransitionChoiceProbabilities2011}. Analogously to Lemma~\ref{lem:characterisation_WB}, the condition $CV^\eta \leq z, i = J^\eta(\bp,y), j = J^\eta(\bp',y)$ can be translated in $i$ being the optimal bundle when faced with a counterfactual price vector.
\begin{lemmarep}\label{lem:characterisation_CV}
We have
\begin{equation}
	\begin{split}
		&\Big\{\eta  \mid CV^\eta \leq z, i = J^\eta(\bp,y), j = J^\eta(\bp',y) \Big\}\\
&= 	\Big\{\eta  \mid U^\eta_i(y-p_i) \geq \max_{c } \{U^\eta_c(y-p_c' - z)\}, i = J^\eta(\bp,y), j = J^\eta(\bp',y) \Big\}.
	 	\end{split}
\end{equation} 
\end{lemmarep}

\begin{proofEnd}
\begin{equation*}
	\begin{split}
		&\Big\{\eta  \mid CV^\eta \leq z, i = J^\eta(\bp,y), j = J^\eta(\bp',y) \Big\} \\
		&=\Big\{\eta  \mid \max_{c } \{U^\eta_c(y-p_c' - CV^\eta)\} \geq \max_{c } \{U^\eta_c(y-p_c' - z)\}, i = J^\eta(\bp,y), j = J^\eta(\bp',y) \Big\}\\
		&=\Big\{\eta  \mid 	\max_{c} \{U^\eta_c(y-p_c)\}\geq \max_{c} \{U^\eta_c(y-p_c' - z)\}, i = J^\eta(\bp,y), j = J^\eta(\bp',y) \Big\}\\
&= 	\Big\{\eta  \mid U^\eta_i(y-p_i) \geq \max_{c} \{U^\eta_c(y-p_c' - z)\}, i = J^\eta(\bp,y), j = J^\eta(\bp',y) \Big\},
	 	\end{split}
\end{equation*} 
where the second equality follows from \eqref{eq:def_CV} and the last from $i=J^\eta\left(\bp,y\right)$. 
\end{proofEnd}

With Lemma~\ref{lem:characterisation_CV}, we can state the following theorem.

\begin{thmrep}\label{thm:distribution_CV_initial_final}
The joint distribution of the CV and the optimal choices before and after the price change is as follows:
\begin{equation}\label{eq:CV_joint_distribution}
\Pr_\eta[CV^\eta \leq z, i =J^\eta(\bp,y), j = J^\eta(\bp',y)] = P_{i,j}( \bmin(\bp,\bp' + z ), \bp',y) \mathbb{I}\left[p_i \leq p_i'+ z\right].
\end{equation}
\end{thmrep}

\begin{proofEnd}
We have
\begin{equation*}
	\begin{split}
	&\Pr_\eta[CV^\eta \leq z, i =J^\eta(\bp,y), j = J^\eta(\bp',y)]\\
	 &= \Pr_\eta\Big[U^\eta_i(y - p_i) \geq \max_{k\neq i}  U^\eta_k(y-p_k) ,\quad  U^\eta_j(y- p'_j) \geq \max_{l\neq j}  U^\eta_l(y-p'_l) , \\
&	\qquad\qquad\qquad U^\eta_i(y-p_i) \geq \max_{c} U^\eta_c(y-p_c' - z)\Big] \\
&= \Pr_\eta\Big[U^\eta_i(y - p_i) \geq \max_{k\neq i}  U^\eta_k(y-\min(p_k,p_k' + z )) ,\quad U^\eta_j(y- p'_j) \geq \max_{l\neq j}  U^\eta_l(y-p'_l) \Big]  \\
&\qquad \qquad\qquad\mathbb{I}\left[p_i \leq p_i' + z\right]\\
&= P_{i,j}( \bmin(\bp,\bp' + z ), \bp',y) \mathbb{I}\left[p_i \leq p_i'+ z\right].
\end{split}\end{equation*}
\end{proofEnd}

We observe that $\underset{\eta}{\Pr}[CV^\eta \leq z, i =J^\eta(\bp,y), j = J^\eta(\bp',y)]$ is bounded from below by $p_i-p_{i}'$. This is as expected; if the initial optimal bundle was $i$ and the price of $p_i$ drops to $p_i'$, the numeraire must drop with at least this amount to be equally well-off as in the initial situation. This means that the minimal compensation, in terms of the joint distribution, is $p_i-p_i'$. Moreover, for $z \geq \max_{k}\{p_k - p'_k\}$, $\underset{\eta}{\Pr}[CV^\eta \leq z\mid i =J^\eta(\bp,y), j = J^\eta(\bp',y)] =1$. This means that the maximal compensation, in terms of the conditional distribution, cannot be higher than the maximal price difference, which is also as expected.

The next corollary follows immediately and may again be more useful to the applied researcher.

\begin{corollary} \label{cor:CV}
\begin{equation}\label{eq:CV_cond_pre_post}
\Pr_\eta[CV^\eta \leq z \mid i =J^\eta(\bp,y), j =J^\eta(\bp',y)]  =  \frac{P_{i,j}( \bmin(\bp,\bp' + z ), \bp',y)}{P_{i,j}( \bp,\bp',y)} \mathbb{I}\left[p_i \leq p_i'+ z\right] ,
\end{equation}
\begin{equation}\label{eq:CV_cond_pre}
\Pr_\eta[CV^\eta \leq z \mid i =J^\eta(\bp,y)]  =  \frac{P_{i}( \bmin(\bp,\bp' + z ),y)}{P_{i}( \bp,y)} \mathbb{I}\left[p_i \leq p_i'+ z\right],
\end{equation}
\begin{equation}\label{eq:CV_cond_post}
\Pr_\eta[CV^\eta \leq z \mid  j = J^\eta(\bp',y)]  = \sum_{i}\frac{P_{i,j}( \bmin(\bp,\bp' + z ), \bp',y)}{P_{j}( \bp',y)} \mathbb{I}\left[p_i \leq p_i' + z \right],
\end{equation}
and
\begin{equation}\label{eq:CV_marginal}
\Pr_\eta[CV^\eta \leq z]  = \sum_{i} P_{i}( \bmin(\bp,\bp' + z ),y) \mathbb{I}\left[p_i \leq p_i'+ z\right].\footnote{Note that Equation~\eqref{eq:CV_marginal} is the main result of \cite{bhattacharyaNonparametricWelfareAnalysis2015}, which is a special case in our setup.}
\end{equation}
\end{corollary}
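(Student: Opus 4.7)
The plan is to derive all four equations as direct consequences of Theorem~\ref{thm:distribution_CV_initial_final}, by combining it with standard marginalization and conditioning identities for joint distributions and with the defining property of transition probabilities, namely $\sum_{j} P_{i,j}(\bp_1,\bp_2,y) = P_i(\bp_1,y)$ and $\sum_{i} P_{i,j}(\bp_1,\bp_2,y) = P_j(\bp_2,y)$, which follow from Equation~\eqref{eq:transprob} together with the fact that $J^\omega(\bp,y)$ is single-valued almost surely by regularity condition (R2) in Assumption~\ref{assumption1}.

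Concretely, for Equation~\eqref{eq:CV_cond_pre_post} I would simply divide both sides of Equation~\eqref{eq:CV_joint_distribution} by $P_{i,j}(\bp,\bp',y)$, invoking the definition of conditional probability. For Equation~\eqref{eq:CV_cond_pre}, the plan is first to sum both sides of Equation~\eqref{eq:CV_joint_distribution} over $j \in \mathcal{C}$; on the left this yields $\Pr_\omega[CV^\omega \leq z, i = J^\omega(\bp,y)]$, while on the right the indicator $\mathbb{I}[p_i \leq p'_i+z]$ is independent of $j$ and can be pulled outside the sum, and the sum $\sum_j P_{i,j}(\bmin(\bp,\bp'+z), \bp', y)$ collapses to $P_i(\bmin(\bp,\bp'+z), y)$ by the marginalization identity. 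Dividing by $P_i(\bp,y)$ then gives the result.

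For Equation~\eqref{eq:CV_cond_post}, I would instead sum Equation~\eqref{eq:CV_joint_distribution} over $i \in \mathcal{C}$, here keeping the indicator $\mathbb{I}[p_i \leq p'_i+z]$ \emph{inside} the sum because it does depend on the summation index $i$. Dividing by $P_j(\bp',y)$ produces Equation~\eqref{eq:CV_cond_post}. Finally, Equation~\eqref{eq:CV_marginal} is obtained by summing Equation~\eqref{eq:CV_joint_distribution} over both $i$ and $j$: the inner sum over $j$ again collapses $P_{i,j}$ to $P_i$ by marginalization, and the outer sum over $i$ then matches the stated expression.

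The only point that requires any care is the bookkeeping of the indicator functions across the various marginalizations: in Equation~\eqref{eq:CV_cond_pre} the indicator depends only on the conditioning index $i$, so it factors out of the sum over $j$, whereas in Equations~\eqref{eq:CV_cond_post} and~\eqref{eq:CV_marginal} it depends on the summation index and must remain inside. No further argument is needed beyond Theorem~\ref{thm:distribution_CV_initial_final} itself; the corollary is essentially a restatement.
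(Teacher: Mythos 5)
Your proposal is correct and matches the paper's intent: the paper states that Corollary~\ref{cor:CV} ``follows immediately'' from Theorem~\ref{thm:distribution_CV_initial_final}, and the route you spell out --- conditioning for \eqref{eq:CV_cond_pre_post}, marginalizing the transition probability over the unconstrained choice index (using that $J^\omega$ is single-valued a.s.\ by (R2)) for the other three, with the indicator correctly kept inside the sum exactly when it depends on the summation index --- is the standard filling-in of that ``immediately.'' No gaps.
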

Equation~\eqref{eq:CV_marginal} gives an expression for the marginal distribution of the CV. Equations~\eqref{eq:CV_cond_pre_post},~\eqref{eq:CV_cond_pre}, and~\eqref{eq:CV_cond_post}, which present conditional distributions, can be used to calculate the distribution of the CV when the optimal bundle(s) (i)~ before and after price change are known; (ii)~only before the price change is known; and (iii)~only after the price change is known.

\paragraph{Joint distribution of the MMU and the CV.}
We now apply Theorem \ref{thm:joint_distribution_general_arbitrary} to the case where one chooses the MMU with final prices as the reference price vector, as a welfare measure. The difference in welfare before and after the price change is then equal to  the CV.
\begin{thmrep}\label{thm:joint_distribution_CV_arbitrary}
The joint distribution of the MMU with reference prices $\bp'$ and the CV is as follows:
\begin{equation}\label{eq:joint_distribution_CV_arbitrary}
\begin{split}
\Pr_\eta[w \leq W^\eta_{M(\bp')}(y-p_i,i), CV^\eta \leq z , i = J^\eta(\bp,y), j = J^\eta(\bp',y)] \\
= P_{i,j}\Big( \bmin\big(\bp,\bp' + \min(z, y-w) \big),\bp',y\Big) \mathbb{I}\left[p_i \leq p_i'+ \min(z, y-w)\right].
\end{split}
\end{equation}
\end{thmrep}

\begin{proofEnd}
\textbf{A direct proof of Theorem \ref{thm:joint_distribution_CV_arbitrary} }\\

We have
\begin{equation*}
	\begin{split}
	&\Pr_\eta[w \leq W^\eta_{M(\bp')}(y-p_i,i), CV^\eta \leq z, i = J^\eta(\bp,y), j = J^\eta(\bp',y)] \\
&= \Pr_\eta\Big[U^\eta_i(y-p_i) \geq \max_{c'} U^\eta_{c'}(y- (y -w + p'_{c'}) ), \quad U^\eta_i(y - p_i) \geq \max_{k\neq i}  U^\eta_k(y-p_k) ,\\
&	\qquad\qquad  U^\eta_j(y- p'_j) \geq \max_{l\neq j}  U^\eta_l(y-p'_l) , \quad U^\eta_i(y-p_i) \geq \max_{c } U^\eta_c(y-p_c' - z)\Big] \\
&= \Pr_\eta\Big[U^\eta_i(y - p_i) \geq \max_{k\neq i}  U^\eta_k(y-\min(p_k, p'_k + y- w, p_k' + z )) , \\
&\qquad \qquad  U^\eta_j(y- p'_j) \geq \max_{l\neq j}  U^\eta_l(y-p'_l) \Big]\mathbb{I}\left[p_i \leq p_i' + z\right]\mathbb{I}\left[p_i \leq  p_i' + y - w\right]\\
&= P_{i,j}\Big( \big(p_i,\bmin\big(\bp_{-i},\bp_{-i}' + \min(z ,y-w)\big)\big),\bp',y\Big) \mathbb{I}\left[p_i \leq p_i'+  \min(z ,y-w)\right]\\
&= P_{i,j}\Big( \bmin\big(\bp,\bp' + \min(z ,y-w)\big),\bp',y\Big) \mathbb{I}\left[p_i \leq p_i'+  \min(z ,y-w)\right].
\end{split}
\end{equation*}

\textbf{Theorem \ref{thm:joint_distribution_CV_arbitrary} as implied by Theorem \ref{thm:joint_distribution_general_arbitrary}}\\

When choosing the MMU with the final prices as reference prices, Theorem~\ref{thm:joint_distribution_general_arbitrary} implies:
\begin{equation}\label{eq:thm2_applied_to_MMU}
\begin{split}
     &\Pr_\eta[w \leq W^\eta_{M(\bp')}(y-p_i,i), CV^\eta \leq z , i = J^\eta(\bp,y), j = J^\eta(\bp',y)] \\
     &=-\int_{-\infty}^{+\infty} \partial_{3} h_{i,j,\bp,\bp'}(w,x, x +z) \mathbb{I}\left[p_i \leq \min(p'_i + y - w, p'_i + y - x)\right]\,dx.
\end{split}
\end{equation}
where the function $h$ is defined by 
\begin{equation}\label{eq:h_for_CV}
     h_{i,j,\bp,\bp'}(w,x,s) = P_{i,j}\Big( \bmin\big(\bp, \bp' + y -\max(w,x)\big),\bmin\big(\bp', \bp' + y -s \big),y\Big)  \mathbb{I}\left[p'_j \leq p'_j + y -s\right].
\end{equation}
Rewriting, \eqref{eq:h_for_CV}, we obtain
    \begin{align*}
     h_{i,j,\bp,\bp'}(w,x,s) &= P_{i,j}\Big( \bmin\big(\bp, \bp' + y -\max(w,x)\big),\bmin\big(\bp', \bp' + y -s \big),y\Big)  \mathbb{I}\left[p'_j \leq p'_j + y -s\right]\\
     &=P_{i,j}\Big( \bmin\big(\bp, \bp' + y -\max(w,x)\big),\bp',y\Big)  \mathbb{I}\left[s \leq y \right],
\end{align*}
and hence 
\begin{equation*}
      \partial_{3} h_{i,j,\bp,\bp'}(w,x, x +z) = - P_{i,j}\Big( \bmin\big(\bp, \bp' + y -\max(w,x)\big),\bp',y\Big)  \delta(x+z - y),
\end{equation*}
where $\delta$ is a Dirac delta function. Plugging this in in \eqref{eq:thm2_applied_to_MMU}, we obtain
\begin{align*}
     &\Pr_\eta[w \leq W^\eta_{M(\bp')}(y-p_i,i), CV^\eta \leq z , i = J^\eta(\bp,y), j = J^\eta(\bp',y)] \\
     &=-\int_{-\infty}^{+\infty} \partial_{3} h_{i,j,\bp,\bp'}(w,x, x +z) \mathbb{I}\left[p_i \leq \min(p'_i + y - w, p'_i + y - x)\right]\,dx\\
     &= P_{i,j}\Big( \bmin\big(\bp, \bp' + y -\max(w,y-z)\big),\bp',y\Big)\mathbb{I}\left[p_i \leq \min(p'_i + y - w, p'_i + y - (y-z))\right]\\
     &= P_{i,j}\Big( \bmin\big(\bp, \bp' + \min(y-w,z)\big),\bp',y\Big)\mathbb{I}\left[p_i \leq p'_i + \min(y - w,z)\right]
\end{align*}
as in Theorem \ref{thm:joint_distribution_CV_arbitrary}.
\end{proofEnd}

Again, Corollary \ref{cor:WandCV} follows immediately. 
\begin{corollary}\label{cor:WandCV}
\begin{multline}
\Pr_\eta\left[w \leq W^\eta_{M(\bp')}(y-p_i,i),CV^\eta \leq z \mid i =J^\eta(\bp,y) , j = J^\eta(\bp',y)\right] \\
= \frac{P_{i,j}\Big( \bmin\big(\bp,\bp' + \min(z ,y-w)\big),\bp',y\Big)}{P_{i,j}( \bp,\bp',y)} \mathbb{I}\left[p_i \leq p_i'+ \min(z ,y-w)\right],
\end{multline}
\begin{multline}
\Pr_\eta[w \leq W^\eta_{M(\bp')}(y-p_i,i),CV^\eta \leq z \mid i =J^\eta(\bp,y)]  \\
=  \frac{P_{i}\Big( \bmin\big(\bp,\bp' + \min(z ,y-w)\big),y\Big)}{P_{i}( \bp,y)} \mathbb{I}\left[p_i \leq p_i'+ \min(z ,y-w)\right],
\end{multline}
\begin{multline}
\Pr_\eta[w \leq W^\eta_{M(\bp')} \Big(y-p_{J^\eta(\bp,y)}, J^\eta(\bp,y)\Big),CV^\eta \leq z \mid j = J^\eta(\bp',y)] \\
= \sum_{i}\frac{P_{i,j}\Big( \bmin\big(\bp,\bp' + \min(z ,y-w)\big),\bp',y\Big) }{P_{j}( \bp',y)}\mathbb{I}\left[p_i \leq p_i'+ \min(z ,y-w)\right],
\end{multline}
and, \begin{multline}\label{eq:WandCV}
\Pr_\eta[w \leq W^\eta_{M(\bp')} \Big(y-p_{J^\eta(\bp,y)}, J^\eta(\bp,y)\Big) ,CV^\eta \leq z]  \\
= \sum_i
P_{i}\Big( \bmin\big(\bp,\bp' + \min(z ,y-w)\big),y\Big) \mathbb{I}\left[p_i \leq p_i'+ \min(z ,y-w)\right].
\end{multline}
\end{corollary}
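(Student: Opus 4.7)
My plan is to derive all four identities directly from Theorem~\ref{thm:joint_distribution_CV_arbitrary}, which already gives the joint probability
\[
\Pr_\omega\bigl[w \leq W^\omega_{M(\bp')}(y-p_i,i),\, CV^\omega \leq z,\, i = J^\omega(\bp,y),\, j = J^\omega(\bp',y)\bigr]
\]
in closed form. The four displayed expressions are simply (i) conditioning on a refined event, (ii) marginalising out $j$ and conditioning on the pre-change choice, (iii) marginalising out $i$ and conditioning on the post-change choice, and (iv) marginalising out both. Hence the entire argument is probabilistic bookkeeping on top of Theorem~\ref{thm:joint_distribution_CV_arbitrary}; no new identification step is needed.

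For the first identity, I would apply the definition of conditional probability, writing the conditional probability as the ratio of the joint event and the conditioning event $\{i = J^\omega(\bp,y),\, j = J^\omega(\bp',y)\}$, which by definition has probability $P_{i,j}(\bp,\bp',y)$. The numerator is exactly the object given by Theorem~\ref{thm:joint_distribution_CV_arbitrary}. For the second identity, I would first marginalise $j$ out of the joint expression. Since the events $\{j = J^\omega(\bp',y)\}$ partition $\Omega$ (up to a null set by regularity condition~(R2)), summing the transition probabilities over $j$ collapses $P_{i,j}(\bmin(\bp,\bp'+\min(z,y-w)),\bp',y)$ to $P_i(\bmin(\bp,\bp'+\min(z,y-w)),y)$. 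Dividing by $P_i(\bp,y)$ then yields the claim.

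For the third and fourth identities, the subtlety is that welfare is now evaluated at $J^\omega(\bp,y)$, which depends on $\omega$. I would handle this by partitioning on the realised pre-change choice: the event
\[
\bigl\{w \leq W^\omega_{M(\bp')}(y-p_{J^\omega(\bp,y)}, J^\omega(\bp,y))\bigr\} = \bigcup_{i \in \mathcal{C}} \bigl\{i = J^\omega(\bp,y)\bigr\} \cap \bigl\{w \leq W^\omega_{M(\bp')}(y-p_i, i)\bigr\},
\]
a disjoint union (again up to null sets). Summing the formula of Theorem~\ref{thm:joint_distribution_CV_arbitrary} over $i$ then produces the numerator of the third identity, and dividing by $P_j(\bp',y)$ gives the stated conditional CCDF. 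The fourth identity is obtained by additionally summing over $j$, which collapses $P_{i,j}$ to $P_i$ as before, giving the unconditional CCDF.

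The only step requiring any care is justifying that the partition on $J^\omega(\bp,y)$ is valid, i.e.\ that indifferences can be ignored; this is immediate from regularity condition~(R2) in Assumption~\ref{assumption1}, which ensures that the set of $\omega$ with ties has probability zero. All remaining manipulations are elementary arithmetic on the expression supplied by Theorem~\ref{thm:joint_distribution_CV_arbitrary}.
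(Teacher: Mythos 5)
Your proposal is correct and takes essentially the same route as the paper, which simply notes that the corollary ``follows immediately'' from Theorem~\ref{thm:joint_distribution_CV_arbitrary}: conditioning via Bayes' rule, marginalising over $j$ (collapsing $P_{i,j}(\cdot,\bp',y)$ to $P_i(\cdot,y)$), and partitioning on the realised pre-change choice $i=J^\omega(\bp,y)$ for the last two identities, with ties handled by regularity condition (R2). Your write-up just makes this bookkeeping explicit.
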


The joint cumulative distribution can again be written as (a sum of) choice or transition probabilities. Each choice and transition probability is calculated using up to three price vectors: the initial price vector $\bp$, the final price vector $\bp'$, and a translation of the $\bp'$ vector for the combined MMU and CV part.

\subsection{Social welfare}

A classical additively separable Bergson-Samuelson social welfare function (SWF) takes the form
\begin{equation}\label{eq:SWF_BS}
SWF = \int h(u) \,dG_U(u),
\end{equation}
where $u$ is the value of a utility function representing the well-being of an individual in a particular state of the world,  $h$ is a strictly increasing, concave function expressing the inequality aversion, and $G_U$ is the CDF of the well-being distribution in the population in a given state of the world.\footnote{When $h$ is strictly concave, the Bergson-Samuelson SWF is also referred to as a \emph{prioritarian} SWF \citep{Adler_Norheim_2022}.} For example, in the utilitarian case, we have that $h(u) = u$. 

The NOS welfare measures are well suited as a representation of preferences as they are known to satisfy a set of attractive principles of interpersonal comparability (see \citeauthor{Fleurbaey2017}, \citeyear{Fleurbaey2017}; \citeyear{FlM2018}). We can, therefore, use these measures directly as building blocks in the SWF in Equation~\eqref{eq:SWF_BS}. More specifically, the  equivalent to the Bergson-Samuelson SWF in our framework reads as
\begin{equation}\label{eq:SWF_general2}
SWF = \int  \int h(w) \,d F_W(w\mid \bp,y) \,dG(\bp,y),
\end{equation}
where $G$ is the CDF of the joint distribution of prices and exogenous income in the population, which can be observed from the data, and $F_W(w\mid \bp,y)$ is the conditional CDF of the NOS measure $W$, and equals $\underset{\eta}{\mathrm{Pr}}\left[W^\eta\left(y-p_{J^\eta\left(\bp,y\right)}, J^\eta\left(\bp,y\right)\right)\leq w\right]$.\footnote{As social welfare is a population level concept, we rely on the second interpretation of the randomness in the welfare measure (see the discussion at the beginning of Section~\ref{sec:distribution}).}

Proposition~\ref{prop:SWF} illustrates how the results on the distribution of individual welfare levels in Corollary~\ref{cor:WB_derived_distribution} lead to the calculation of social welfare as defined in Equation~\eqref{eq:SWF_general2}, using only choice probabilities.

\begin{propositionrep}\label{prop:SWF}
The conditional CDF of individual welfare in the optimal bundle can be calculated using choice probabilities:
\begin{equation}\label{eq:cond_CDF_welfare}
     F_W(w\mid \bp,y) = 1 - \sum_{k}  P_{k}\Big(\bmin\big( \mathbf{p}, \widetilde{\mathbf{p}}(w)\big), y\Big)  \mathbb{I}\left[p_k \leq \widetilde p_k(w)\right].
\end{equation}
\end{propositionrep}
\begin{proofEnd}
\begin{equation}\label{eq:cond_CDF_welfare_proof}
\begin{split}
     F_W(w\mid \bp,y) &= \underset{\eta}{\mathrm{Pr}}\left[W^\eta\left(y-p_{J^\eta\left(\bp,y\right)}, J^\eta\left(\bp,y\right)\right)\leq w\right]\\
     &=1-\underset{\eta}{\mathrm{Pr}}\left[w\leq W^\eta\left(y-p_{J^\eta\left(\bp,y\right)}, J^\eta\left(\bp,y\right)\right)\right]\\
      &= 1 - \sum_{k}  P_{k}\Big(\bmin\big( \mathbf{p}, \widetilde{\mathbf{p}}(w)\big), y\Big)  \mathbb{I}\left[p_k \leq \widetilde p_k(w)\right],
\end{split}
\end{equation}
where the last equality follows from  Equation~\eqref{eq:WB_marginal_distribution_chosen} in Corollary~\ref{cor:WB_derived_distribution}.
\end{proofEnd}

Hence, social welfare can be computed from these probabilities. The joint distribution of prices and exogenous income $G$ can be estimated separately using standard nonparametric tools.

Moreover, this expression can be used to identify if a price change, for example, due to a policy reform, has a desirable effect on social welfare. Indeed, the difference in social welfare can be calculated as follows:
\begin{equation}\label{eq:diff_SWF}
\begin{split}
SWF' - SWF &= \int  \int h(w) \,d F_W(w\mid \bp',y) \,dG'(\bp',y) -\int  \int h(w) \,d F_W(w\mid \bp,y) \,dG(\bp,y) \\
&=\int  \int h(w) \,d F_W(w\mid \bp + \Delta\bp,y) \,dG'(\bp + \Delta\bp,y) \\
&\qquad \qquad -\int  \int h(w) \,d F_W(w\mid \bp,y) \,dG(\bp,y) \\
&=\int  \int h(w) \,d\Big( F_W(w\mid \bp + \Delta\bp,y) - F_W(w\mid \bp,y)\Big) \,dG(\bp,y). \\
\end{split}
\end{equation}
where $G$ ($G'$) is the joint distribution of initial (final) prices and exogenous income, and $\Delta \bp = \bp' - \bp$. With  Equations \eqref{eq:diff_SWF} and \eqref{eq:cond_CDF_welfare}, one can assess the desirability of a potential price change without parametric assumptions and only using choice probabilities and the initial distribution of prices and exogenous income.

Interestingly, in the spirit of~\cite{Roberts1980priceindependentwf}, we can derive conditions under which the expression for the SWF can be formulated in terms of incomes alone. In particular, when prices are equal for everyone and one uses the MMU with reference prices equal to those common prices, as individual welfare measure, one obtains a price independent SWF in terms of income. 

\begin{corollaryrep}
When prices are equal for everyone and when one uses the MMU with reference prices equal to those common prices  as the welfare measure, the SWF can be written solely in terms of income. 
\end{corollaryrep}

\begin{proofEnd}
From Proposition~\ref{prop:SWF} and the definition of the virtual prices in case of an MMU with actual prices~$\bp$ as reference prices ($\widetilde{\bp}(w)=y-w+\bp$), it follows that 
\begin{equation*}
\begin{split}
F_W(w\mid \bp,y) &= 1 -
\sum_{k}  P_{k}\Big(\bmin\big( \mathbf{p}, \widetilde{\mathbf{p}}(w)\big), y\Big)  \mathbb{I}\left[p_k \leq \widetilde p_k(w)\right]\\
&=1-\sum_{k}  P_{k}\Big(\bmin\big( \mathbf{p}, y-w+\bp\big), y\Big)  \mathbb{I}\left[p_k \leq y-w+p_k)\right]
\\
&=1 -\sum_{k}  P_{k}\Big(\bp,y\Big)  \mathbb{I}\left[w\leq y\right]\\
&=\mathbb{I}\left[y\leq w\right].
\end{split}\end{equation*}

Hence,
\begin{equation*}
\begin{split}
SWF &= \int  \int h(w) \,d F_W(w\mid \bp,y) \,dG(\bp,y)\\
&= \int  \int h(w) d\mathbb{I}\left[y\leq  w \right] \,dG(\bp,y)\\
&= \int h(y)\,dG(\bp,y).
\end{split}
\end{equation*}
Notice that $\bp$ in the argument of $G$ is redundant, as prices are assumed to be identical for all persons in this case. This completes the proof.
\end{proofEnd}

\section{Discussion on implementation}\label{sec:implement}
We now outline the practical implementation of our results in the common scenario where the analyst has access to cross-sectional data. The proposed workflow consists of the following steps.

\begin{itemize}[leftmargin=20mm]
    \item[\textbf{Step 1:}] \textbf{Estimate the choice probabilities} \\
    Begin by estimating the choice probabilities from observed data on choices, prices, and income. This estimation can be conducted using parametric, semi-parametric, or non-parametric methods, as detailed in Section~\ref{sec:estchoiceprob}.
    \item[\textbf{Step 2:}]    
    \textbf{Bound the transition probabilities from the choice probabilities} \\
    Next, derive bounds for the transition probabilities based on the choice probabilities. This step employs Boole-Fr\'echet inequalities and revealed preference constraints, as discussed in Section~\ref{sec:set}.
    \item[\textbf{Step 3:}] \textbf{Estimate the distributional welfare effects} \\
    Finally, compute the distributional welfare effects by substituting the estimates from Steps 1 and 2 into the results provided in Section~\ref{sec:distribution}. These welfare estimates are consistent by the plug-in principle. To enable statistical inference, repeat Steps 1–3 on random bootstrap samples drawn with replacement from the original dataset.
\end{itemize}

\subsection{Estimating the choice probabilities}\label{sec:estchoiceprob}
Given the exogeneity of budget sets presupposed in Assumption \ref{assumption2}, the choice probabilities can be readily estimated using nonparametric regression, as they are essentially conditional expectation functions. Standard tools, such as kernel and series based regression, are available in most modern statistical software. One particular attractive feature of the Nadaraya-Watson kernel estimator is that the estimated choice probabilities add up to one for all price vectors when the same bandwidth is selected for every choice probability function. With samples of modest size, it might be useful to impose additional structure to mitigate the curse of dimensionality. In particular, in a setting with high-dimensional regressors, which arises when there are many alternatives or many individual-specific characteristics, a (semi)parametric estimator can be used to increase efficiency at the expense of functional form misspecification. A popular parametric specification is the (nested) multinomial logit model (e.g., see \citeauthor{trainDiscreteChoiceMethods2003}, \citeyear{trainDiscreteChoiceMethods2003}).

In some circumstances, it might be unreasonable to assume that the budget set $(\mathbf{p}, y)$ is independent of the preference type $\eta$. When instruments are available, however, some forms of endogeneity can be handled by using a standard control function approach \citep{blundellEndogeneitySemiparametricBinary2004}. 

\subsection{Bounding the transition probabilities from the choice probabilities}\label{sec:set}
As mentioned before, the transition probabilities are nonparametrically identifiable and estimable from panel data that contains sufficient relative price and exogenous income variation. This immediately implies that all the results from previous subsections are also nonparametrically identified in such a data setting. One simply has to evaluate the estimated transition probabilities at virtual price vectors.

In many empirical applications, however, researchers only have access to (repeated) cross-sectional data. This type of data nonparametrically identifies the choice probabilities, but not the associated transition probabilities. However, by exploiting Boole-Fr\'echet \citep{frechet} and stochastic revealed preference inequalities, one can derive bounds on the now unobserved transition probabilities based on the observed choice probabilities.

\begin{propositionrep}
	\label{proposition_set}
		The transition probabilities $\{P_{i, j}(\mathbf{p}, \mathbf{p}', y)\}$ are set identified from the choice probabilities $\{P_i\}$ with bounds
		\begin{equation}\label{eq:bounds1}
		\begin{split}
		P^L_{i, i}(\mathbf{p}, \mathbf{p}', y) &= \max\left\{P_i(\mathbf{p},  y) + P_i(\mathbf{p}',  y) - 1, P_i\Big(\big(\max\{p_i, p'_i\}, \bmin\{\bp_{-i}, \bp'_{-i}\}\big), y\Big) \right\},\\ 
		P^U_{i, i}(\mathbf{p}, \mathbf{p}', y) &= 	\min \big\{ P_i(\mathbf{p},  y), \
		P_i(\mathbf{p}', y)\big\}.
		\end{split}
		\end{equation}
For $i\neq j$, $P_{i, j}(\mathbf{p}, \mathbf{p}', y) =0 $ if $p_i \geq p'_i$ and $p_{j} \leq p'_{j}$ and
		\begin{equation}\label{eq:bounds2}
		\begin{split}
		P^L_{i, j}(\mathbf{p}, \mathbf{p}', y) &= 
		    \max\{P_i(\mathbf{p},  y) + P_{j}(\mathbf{p}', y) - 1, \ 0\},  \\
		P^U_{i, j}(\mathbf{p}, \mathbf{p}', y) &= 
		    \min \left\{ P_i(\mathbf{p},  y), P_{j}(\mathbf{p}',y)\right\},
		\end{split}
		\end{equation}
		elsewhere. These bounds are sharp.
	\end{propositionrep}
\begin{proofEnd}
    		\label{proofproposition2}
	
	We first show that Equations~\eqref{eq:bounds1} and~\eqref{eq:bounds2} are valid bounds. One can immediately derive upper and lower bounds that are implied by elementary probability theory. Let $A$ be  the set $\{\eta |i = J^\eta(\bp,y) \}$ and $B$ the set $\{\eta |j = J^\eta(\bp',y) \}$. We have $P(A\cap B) = P_{i,j}(\mathbf{p},\mathbf{p}'; \ y)$, $P(A)=P_i(\mathbf{p}; \ y)$ and $P(B)=P_j(\mathbf{p}'; \ y)$.
	
	For the lower bound, note that 
	\begin{equation}
	    1 \geq P(A\cup B) = P(A) + P(B) - P(A\cap B)
	\end{equation}
	and hence $P(A\cap B) \geq P(A) + P(B) -1$ which translates into 
	\begin{equation}
	    P_{i,j}(\mathbf{p},\mathbf{p}'; \ y) \geq P_i(\mathbf{p}; \ y) + P_j(\mathbf{p}'; \ y) -1.
	\end{equation}
	For the upper bound, note that $P(A\cap B) \leq P(A)$ and $P(A\cap B) \leq P(B)$, and hence
	\begin{equation}
	    P_{i,j}(\mathbf{p},\mathbf{p}'; \ y)\leq \min(P_i(\mathbf{p}; \ y) , P_j(\mathbf{p}'; \ y)).
	\end{equation}
	These inequalities coincide with those derived by \cite{frechet}.
	
		We will now exploit the monotonicity condition imposed on the utility function $U^\eta_c$ to construct tighter bounds based on revealed preference restrictions. First consider the no-transition case. Note that if 
		\begin{equation}
		    U^\eta_i(y-\max\{p_i,p'_i\}) > U^\eta_k(y-\min\{p_k,p'_k\}),
		\end{equation}
	then $U^\eta_i(y-p_i) > U^\eta_k(y-p_k)$ and $U^\eta_i(y-p'_i) > U^\eta_k(y-p'_k)$ and hence 
	\begin{equation}
	   P_i\big((\max\{p_i,p'_i\},\bmin\{\bp_{-i},\bp'_{-i}\} ;y\big) = \Pr_\eta\Big[\cap_{k \neq i}\big\{U^\eta_i(y - \max\{p_i, p'_i\}) > U^\eta_k(y - \min\{p_k, p'_k\}) \big\}\Big] 
	\end{equation}
	is a lower bound of $P_{i,i}(\mathbf{p},\mathbf{p}'; \ y)$.
	
	Finally, for the transition case, some transitions are ruled out by monotonicity. Indeed, if $p_i \geq p'_i$ and $p_{j} \leq p'_{j}$, good $i$ becomes weakly less and good $j$ weakly more expensive after the price change. By monotonicity, it holds that $U^\eta_i(y-p_i) \leq U^\eta_i(y-p'_i)$ and $U^\eta_{j}(y-p_{j}) \geq U^\eta_{j}(y-p'_{j})$, and, hence, if moreover
	$ U^\eta_i(y - p_i) > U^\eta_k(y - p_k)  $ for all $k\neq i$ and $U^\eta_{j}(y - p'_{j}) > U^\eta_k(y - p'_k)$ for all $k\neq j$, then
	\begin{equation}
	   U^\eta_i(y-p'_i) \geq U^\eta_i(y-p_i)  > U^\eta_{j}(y - p_{j}) >  U^\eta_i(y - p'_i),
	\end{equation}
	which is a contradiction.
	Hence, if $p_i \geq p'_i$ and $p_{j} \leq p'_{j}$, then $P_{i,j}(\bp,\bp',y) = 0$.
	
	\bigskip
	We now demonstrate that the bounds derived above are sharp. To attain the lower bound implied by revealed preference restrictions, we construct a sequence of utility functions that maximizes the probability of ``staying" with choice $i$ when prices change. Specifically, consider $\bp$-dependent sequences $\{U^\eta_{c,b}: b = 1, 2, \dots\}$ such that, for alternative $i$,
	\begin{equation}
        \begin{cases}
            U^\eta_{i,b+1}(y- (p_i + \tau)) > U^\eta_{i,b}(y- (p_i + \tau)), & \text{ if } \tau >0, \\
	        U^\eta_{i,b+1}(y- (p_i + \tau)) = U^\eta_{i,b}(y- (p_i + \tau)), & \text{ if } \tau \leq 0, \\
        \end{cases}
	\end{equation}
	and 
    \begin{equation}
        \lim_{b \rightarrow \infty} U^\eta_{i,b}(y- (p_i + \tau)) = \lim_{b \rightarrow \infty} U^\eta_{i,b}(y- p_i), \quad \text{ if } \tau > 0.
    \end{equation}
    These conditions imply that individuals become unresponsive to price increases (i.e., $\tau > 0$) for alternative $i$ in the limit. Consequently, 
    \begin{equation}
        \lim_{b \rightarrow \infty} U^\eta_{i,b}(y- p_i) = \lim_{b \rightarrow \infty} U^\eta_{i,b}(y- \max\{p_i, p_i'\}).
    \end{equation}
    Similarly, for alternatives $c \neq i$, we require that  
    \begin{equation}
        \begin{cases}
	        U^\eta_{c,b+1}(y- (p_c + \tau)) = U^\eta_{c,b}(y- (p_c + \tau)), & \text{ if } \tau \geq 0, \\
            U^\eta_{c,b+1}(y- (p_c + \tau)) < U^\eta_{c,b}(y- (p_c + \tau)), & \text{ if } \tau <0, \\
        \end{cases}
	\end{equation}
    and 
    \begin{equation}
        \lim_{b \rightarrow \infty} U^\eta_{c,b}(y- (p_c + \tau)) = \lim_{b \rightarrow \infty} U^\eta_{c,b}(y- p_c), \quad \text{ if } \tau < 0.
    \end{equation}
    Thus, individuals become unresponsive to price decreases (i.e., $\tau < 0$) for alternatives $c \neq i$ in the limit. Consequently,
    \begin{equation}
        \lim_{b \rightarrow \infty} U^\eta_{c,b}(y- p_c) = \lim_{b \rightarrow \infty} U^\eta_{c,b}(y- \min\{p_c, p_c'\}), \quad \forall c \neq i.
    \end{equation}
From the definition of the transition probabilities in \eqref{eq:transprob}, we have that
	\begin{equation}
	    \begin{split}
	        \lim_{b\rightarrow \infty} P_{i, i}^b(\mathbf{p}, \mathbf{p}', y)&:= 
				\lim_{b\rightarrow \infty} \Pr_\eta \left[ \left \{U^\eta_{i,b}(y - p_i) \geq \max_{c \neq i}\{U^\eta_{c,b}(y-p_c)\}\right\} 
			\cap \left \{U^\eta_{i, b}(y - p'_i) \geq \max_{c \neq i} \{U^\eta_{c,b} (y-p'_c)\}\right\} \right] \\
			&= 
				\Pr_\eta \Bigg[ \lim_{b\rightarrow \infty} \left\{ U^\eta_{i,b}(y - \max\{p_i, p_i'\})  \geq  \max_{c\neq i} \left\{U^\eta_{c,b}(y-\min\{p_c, p_c'\})\right\}\right\} \\
                & \qquad \cap \lim_{b\rightarrow \infty} \left\{U^\eta_{i, b}(y - p'_i) \geq \max_{c \neq i} \{U^\eta_{c,b} (y-p'_c)\}\right\} \Bigg] \\
			 	&= 
				\Pr_\eta \left[\lim_{b\rightarrow \infty}  \left \{ U^\eta_{i,b}(y - \max\{p_i, p_i'\})  \geq  \max_{c\neq i} \left\{U^\eta_{c,b}(y-\min\{p_c, p_c'\})\right\}\right\} 
			 \right] \\
			 &= \lim_{b\rightarrow \infty} P_i^b\left(\max\{p_i,p'_i\},\bmin\{\bp_{-i},\bp'_{-i}\} ;y\right),
	    \end{split}
	\end{equation}
	where the second and fourth equalities follow because we consider a decreasing sequence of nested events.\footnote{Recall that for a decreasing  sequence of events $A_1 \supset A_2 \supset A_3 \supset \dots$ with limit $A = \cap_{m=1}^\infty A_m$, it holds that $\lim_{n \rightarrow \infty} \Pr[A_n] = \Pr[A]$.
	} The third equality holds because the first event is a proper subset of the second.
	
	Bivariate distributions that are on the  Boole-Fr\'echet bounds can be constructed by using insights from copula theory. Perfect positive dependence of the choice probabilities (i.e., \emph{co-monotonicity}) delivers the upper bound, while perfect negative dependence (i.e., \emph{counter-monotonicity}) delivers the lower bound. Consider an additive DC-RUM, i.e., $U^\eta_c(y-p_c) := V_c(y-p_c) + \zeta_c(\eta)$ for all alternatives $c$, for which we introduce the abbreviations $V_c = V_c(y-p_c)$ and $V'_c = V_c(y-p_c') $.
	
	Suppose that $\zeta_c(\eta) = 0$ for all $c$ except for a $k \neq i, j$. The transition probability $P_{i,j}(\bp, \bp', y)$ is then equal to
	\begin{equation}
	    \begin{split}
	        P_{i,j}(\bp, \bp', y) &= \Pr_\eta\left[\left\{V_i - V_k \geq \zeta_k(\eta)\right\} \cap \left\{V_j' - V_k' \geq \zeta_k(\eta)\right\}   \right] \mathbb{I}[V_i > V_c,\forall c\neq i,k]\mathbb{I}[V_j' > V_c', \forall c\neq j,k] \\
	        &=\Pr_\eta\left[\left\{\min\{V_i - V_k, V_j' - V_k'\} \geq \zeta_k(\eta)\right\} \right] \mathbb{I}[V_i > V_c,\forall c\neq i,k]\mathbb{I}[V_j' > V_c', \forall c\neq j,k]  \\
	        &= \min \left\{\Pr_\eta[V_i - V_k \geq \zeta_k(\eta)], \Pr_\eta[V_j' - V_k' \geq \zeta_k(\eta)]\right\} \mathbb{I}[V_i > V_c,\forall c\neq i,k]\mathbb{I}[V_j' > V_c', \forall c\neq j,k]\\
	        &= \min \left\{\Pr_\eta[V_i - V_k \geq \zeta_k(\eta)] \mathbb{I}[V_i > V_c,\forall c\neq i,k], \Pr_\eta[V_j' - V_k' \geq \zeta_k(\eta)] \mathbb{I}[V_j' > V_c', \forall c\neq j,k]\right\}\\
	        &= \min \{P_i(\bp, y), P_j(\bp', y)\},
	    \end{split}
	\end{equation}
	which is the Boole-Fr\'echet upper bound. 
	
	Now suppose that $\zeta_c(\eta) = 0$ for all $c$ except $i$, for which it is uniformly distributed on the unit interval, and suppose that $\mathbb{I}[V_j' > V_c', \forall c\neq j,i] = 1$ and that $0 \leq -\min_{c\neq i} \{V_i - V_c\} < V_j' - V_i' \leq 1$. In that case, the transition probability is equal to
	\begin{equation}
	    \begin{split}
	        P_{i,j}(\bp, \bp', y) &= \Pr_\eta\left[\left\{\min_{c\neq i}\{V_i - V_c\} \geq - \zeta_i(\eta)\right\} \cap \left\{V_j' - V_i' \geq \zeta_i(\eta)\right\}   \right] \mathbb{I}[V_j' > V_c', \forall c\neq j,i] \\
	        &=\Pr_\eta\left[\left\{-\min_{c\neq i} \{V_i - V_c\} \leq \zeta_i(\eta) \leq V_j' - V_i' \right\} \right]\mathbb{I}[V_j' > V_c', \forall c\neq j,i] \\
	        &= \left(\left(V_j' - V_i'\right) + \min_{c\neq i}\{V_i - V_c\}\right)\mathbb{I}[V_j' > V_c', \forall c\neq j,i] \\
	        &= P_j(\bp', y) + P_i(\bp, y) - 1,
	    \end{split}
	\end{equation}
	which is the Boole-Fr\'echet lower bound. 
\end{proofEnd}

The Boole-Fr\'echet inequalities ensure that the transition probabilities are weakly smaller than their associated marginal choice probabilities $P_i(\mathbf{p},  y)$ and $P_{j}(\mathbf{p}',  y)$. When $P_i(\mathbf{p},  y) + P_{j}(\mathbf{p}', y) - 1 >0$ they also deliver  nontrivial lower bounds. The stochastic revealed preference inequalities, which stem from the strong monotonicity of the utility function (see Assumption~\ref{assumption1}), provide additional identificational power in two particular instances. Firstly, by evaluating the choice probabilities at the least-favorable price vector $(\max\{p_i, p'_i\}, \bmin\{\bp_{-i}, \bp'_{-i}\})$, they yield an informative lower bound for the transition probabilities in the no-transition case where $i=j$. Secondly, when $i$ becomes weakly less expensive and $j \neq i$ becomes weakly more expensive, the transition probability should equal zero, as it is irrational for individuals to make this transition within the context of our model.

In the context of continuous choice, similar inequalities have been exploited by \cite{hoderleinstoye} and \cite{kitanurastoye}. The latter provide algorithmic tools to assess whether repeated cross-sectional data is compatible with the hypothesis of utility maximization. They do so by testing if there exist preference types that can rationalize the distribution of demand across segments (so-called \emph{patches}) of budget sets. While their computational approach could be employed to construct bounds on the transition probabilities in Equation~\eqref{eq:transprob}, Proposition~\ref{proposition_set} shows that there is a simple, closed-form solution in our setup. 

In an empirical application on labor supply \citep{longversion}, we find the identified sets for NOS measures to be small, which suggests that cross-sectional data is sufficiently rich for applied welfare analysis.

\subsection{Further practical guidance}\label{sec:implementation}

\paragraph{Differences in exogenous income.}
The assumption that the exogenous income $y$ is common to both situations with prices $\bp$ and $\bp'$ imposes no constraints on the transition probabilities $P_{i,j}(\bp, \bp', y)$. Indeed, if exogenous incomes are different when faced with prices $\bp$ and $\bp'$ (denoted by $y$ and $y'$, respectively), we can always redefine prices and incomes in order to obtain a common exogenous income. To see this, let $\bp'' = \bp'-y'+y$, such that
\begin{equation}
    \begin{split}
        P_{i,j}(\bp,\bp' ,y, y') &\equiv\Pr_\eta \left[ i=J^\eta(\bp,y), j = J^\eta(\bp',y')\right] \\
        &= 	\Pr_\eta \left[ i=J^\eta(\bp,y), j = J^\eta(\bp'-y' + y,y)\right]\\
        &= P_{i,j}(\bp,\bp'' ,y).
    \end{split}
\end{equation}

\paragraph{Presence of an outside option.}
Moreover, in some applications, there is an outside option that exhibits no independent price variation, which also hinders the direct empirical implementation of our approach. However, this difficulty can be circumvented by exploiting variation in the exogenous income $y$. Suppose alternative~$c_o \in \mathcal{C}$ is the outside option for which one has to evaluate the effect of a price change $\Delta p_o = p_o' - p_o$. By a change of variables, it then always holds that $P_i(\mathbf{p}', y) = P_i(\mathbf{p}'-\Delta p_o, y - \Delta p_o)$. Note that  the expression at the right-hand side does not require price variation for $c_o$, as $p_o' - \Delta p_o = p_o$ by construction.

\paragraph{Average welfare.}
A well-known implication of Fubini's theorem is that the mean of any random variable $X$, given that it exists, can be directly derived from its cumulative density function $F_X$, i.e.,
\begin{equation}
	\mathbb{E}_{F_X}(X) = \int_{0}^{\infty} (1 - F_X(u))du - \int_{-\infty}^{0}F_X(u)du.
\end{equation}
This result allows us to calculate average welfare from any of the distributional results derived in this paper. Note that when only bounds on the distribution of interest are available (see Section \ref{sec:set}), the expected value can be bounded by $\mathbb{E}_{F_X^U}(X) \leq \mathbb{E}_{F_X}(X) \leq \mathbb{E}_{F_X^L}(X)$, where $F^L_X$ and $F^U_X$ denote the CDF of the lower and upper bound respectively.

\section{Concluding remarks}\label{sec:conclusion}
In this paper, we provided a coherent framework to conduct individual and social welfare analysis for discrete choice. Allowing for unrestricted, unobserved preference heterogeneity, we argue that individual welfare measures become random variables from the point of view of the econometrician. For the class of NOS measures, we developed nonparametric methods to retrieve their distributions from observational data. In particular, we proved that all relevant marginal, conditional, and joint distributions can be expressed in terms of choice or transition probabilities, which are nonparametrically point-identified from cross-sectional and panel data, respectively. We also showed how transition probabilities can be set-identified when only cross-sectional data is available, which is important in empirical applications.

There are several promising directions for future research. Firstly, our results could be extended to settings where observed attributes of the alternatives, beyond prices, are subject to change. Along similar lines, future work could examine the welfare implications of introducing or removing certain alternatives. Such extensions are likely to result in set-identification rather than point-identification of the distributions of interest. Secondly, a valuable methodological advancement would be to incorporate measurement and optimization errors into the formal analysis. These errors may account for a substantial portion of the observed variation in choices, depending on the application, and could introduce bias into welfare estimates if not properly addressed. Lastly, it would be interesting to investigate the extent to which our results can be generalized to settings where some welfare-relevant outcomes are not choice variables or cannot be directly expressed in monetary terms. A canonical example of such an application is health.

\clearpage
\bibliography{references}

\clearpage

\setcounter{page}{1}
\begin{center}
    {\huge Appendix \\
    \vspace{10pt}
    \Large Identifying the Distribution of Welfare from Discrete Choice \\
    \normalsize \vspace{10pt}
    \textit{Bart Cap\'eau, Liebrecht De Sadeleer, and Sebastiaan Maes}
  }
\end{center}

\begin{appendix} \label{sec:onlineappendix}

\section{Distributional results for the EV}\label{app:add_results}
In this section, we derive analogue results to Theorem~\ref{thm:distribution_CV_initial_final}, Corollary~\ref{cor:CV}, and Theorem~\ref{thm:joint_distribution_CV_arbitrary}, but now for the EV instead of the CV. For an individual of type $\eta$, $EV^\eta$  is defined as 
\begin{equation}\label{eq:def_EV}
	\max_{c} \{U^\eta_c(y-p_c - EV^\eta)\} = \max_{c} \left\{U^\eta_c\left(y-p_c'\right)\right\},
\end{equation}
i.e, the amount of money (possibly negative) an individual has to pay before the reform to be equally well-off as after the reform.

\begin{thm}\label{thm:ev}
For the distribution of the EV, we have the following results:
\begin{align}
    \Pr_\eta[EV^\eta \leq z, i =J^\eta(\bp,y), j = J^\eta(\bp',y)] &= P_{i,j}(\bp, \bmin(\bp+z,\bp' ),y) \mathbb{I}\left[p_j' \leq p_j+ z\right], \\
    \Pr_\eta[EV^\eta \leq z \mid i =J^\eta(\bp,y), j = J^\eta(\bp',y)] &= \frac{P_{i,j}(\bp, \bmin(\bp+z,\bp' ),y)}{P_{i,j}(\bp, \bp', y)} \mathbb{I}\left[p_j' \leq p_j+ z\right], \\
    \Pr_\eta[EV^\eta \leq z \mid i = J^\eta(\bp,y)] &= \sum_{j}\frac{P_{i,j}(\bp, \bmin(\bp+z,\bp' ),y)}{P_i(\bp, y)}\mathbb{I}\left[p_j' \leq p_j+ z\right], \\
    \Pr_\eta[EV^\eta \leq z \mid j = J^\eta(\bp',y)] &= \frac{P_j(\bmin(\bp+z,\bp' ),y)}{P_j(\bp', y)} \mathbb{I}\left[p_j' \leq p_j+ z\right],\\
    \Pr_\eta[EV^\eta \leq z]  &= \sum_{j} P_j(\bmin(\bp+z,\bp' ),y)\mathbb{I}\left[p_j' \leq p_j+ z\right].
\end{align}{}
\end{thm}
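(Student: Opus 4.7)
The plan is to mimic the proof of Theorem \ref{thm:distribution_CV_initial_final} and Corollary \ref{cor:CV} step by step, but with the roles of the initial and final price regimes swapped, since the EV measures compensation evaluated at pre-change prices rather than post-change prices.

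First I would establish an EV-analogue of Lemma \ref{lem:characterisation_CV}. The function $t \mapsto \max_c U^\omega_c(y - p_c - t)$ is strictly decreasing in $t$ by Assumption \ref{assumption1}, so using the implicit definition in Equation \eqref{eq:def_EV}, the event $\{EV^\omega \leq z\}$ is equivalent to $\{\max_c U^\omega_c(y - p_c') \geq \max_c U^\omega_c(y - p_c - z)\}$. On the event $\{j = J^\omega(\bp', y)\}$, the left-hand maximum equals $U^\omega_j(y - p_j')$, so
\begin{equation*}
\begin{split}
&\bigl\{\omega \mid EV^\omega \leq z,\; i = J^\omega(\bp,y),\; j = J^\omega(\bp',y)\bigr\} \\
&= \bigl\{\omega \mid U^\omega_j(y - p_j') \geq \max_c U^\omega_c(y - p_c - z),\; i = J^\omega(\bp,y),\; j = J^\omega(\bp',y)\bigr\}.
\end{split}
\end{equation*}

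Next I would merge the new constraint with the post-change optimality of $j$. Since the combined constraint requires $U^\omega_j(y - p_j')$ to dominate both $U^\omega_l(y - p_l')$ (for $l \neq j$) and $U^\omega_c(y - p_c - z)$ (for all $c$), it is equivalent, via the $c = j$ term yielding the indicator $\mathbb{I}[p_j' \leq p_j + z]$, to the single optimality condition $U^\omega_j(y - p_j') \geq \max_{l \neq j} U^\omega_l\bigl(y - \min(p_l', p_l + z)\bigr)$ together with that indicator. Combining with the initial-price optimality of $i$, the joint probability collapses to $P_{i,j}\bigl(\bp, \bmin(\bp', \bp+z), y\bigr)\,\mathbb{I}[p_j' \leq p_j+z]$, proving the first equation of Theorem \ref{thm:ev}.

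The remaining four identities then follow by standard probabilistic manipulations: the conditional distribution given both choices comes from dividing by $P_{i,j}(\bp,\bp',y)$; the conditional given only $i = J^\omega(\bp, y)$ by summing the joint expression over $j$ and dividing by $P_i(\bp, y)$, after noting that $\sum_j P_{i,j}(\bp, \bq, y) = P_i(\bp, y)$ is independent of $\bq$; the conditional given only $j = J^\omega(\bp', y)$ by summing over $i$ and using $\sum_i P_{i,j}(\bp, \bq, y) = P_j(\bq, y)$; and the marginal by summing over both $i$ and $j$. The main obstacle I anticipate is the bookkeeping in the merging step, in particular verifying that the mixed-price argument of the transition probability really is $\bmin(\bp', \bp+z)$ in every coordinate (including $j$, where the indicator ensures the two coincide) and that $\mathbb{I}[p_j' \leq p_j+z]$ is the correct and only remaining boundary condition.
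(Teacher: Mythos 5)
Your proposal is correct and follows essentially the same route as the paper: the same monotonicity-based characterisation of $\{EV^\omega \leq z\}$ via the implicit definition in Equation \eqref{eq:def_EV}, the same merging of that event with the optimality of $j$ at $\bp'$ to produce the price vector $\bmin(\bp+z,\bp')$ and the indicator $\mathbb{I}[p_j' \leq p_j + z]$, and the same marginalisation/conditioning steps for the remaining identities. (Your parenthetical that $\sum_j P_{i,j}(\bp,\mathbf{q},y)=P_i(\bp,y)$ is not actually needed, since the $j$-dependent indicator keeps the sum from collapsing, but this does not affect the argument.)
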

\begin{proof}{}
    We have that 
    \begin{equation*}
	\begin{split}
		&\Big\{EV^\eta \leq z, i = J^\eta(\bp,y), j = J^\eta(\bp',y) \Big\} \\
		&=\Big\{\max_{c } \{U^\eta_c(y-p_c - EV^\eta)\} \geq \max_{c} \{U^\eta_c(y-p_c - z)\} , i = J^\eta(\bp,y), j = J^\eta(\bp',y) \Big\}\\
&= 	\Big\{\max_{c} \left\{U^\eta_c\left(y-p_c'\right)\right\} \geq \max_{c} \{U^\eta_c(y-p_c - z)\}, i = J^\eta(\bp,y), j = J^\eta(\bp',y) \Big\},\\
	 	\end{split}
\end{equation*} 
such that,
\begin{equation*}
	\begin{split}
	&\Pr_\eta[EV^\eta \leq z, i =J^\eta(\bp,y), j = J^\eta(\bp',y)]\\
	 &= \Pr_\eta\Big[U^\eta_i(y - p_i) \geq \max_{k\neq i}  U^\eta_k(y-p_k) ,\quad  U^\eta_j(y- p_j') \geq \max_{l\neq j}  U^\eta_l(y-p'_l) , \\
&	\qquad\qquad\qquad \max_c U^\eta_c(y-p_c') \geq \max_{c} \{ U^\eta_c(y-p_c - z)\}\Big] \\
&= \Pr_\eta\Big[U^\eta_j(y - p_j') \geq \max_{k\neq j}  U^\eta_k(y-\min(p_k+z,p_k')) ,\\
&\qquad\qquad U^\eta_i(y- p_i) \geq \max_{l\neq i}  U^\eta_l(y-p_l) \Big]  \mathbb{I}\left[p_j' \leq p_j + z\right]\\
&= P_{i,j}(\bp, \bmin(\bp+z,\bp' ),y) \mathbb{I}\left[p_j' \leq p_j+ z\right].
\end{split}\end{equation*}
The other equalities follow directly.
\end{proof}

\begin{thm}{}\label{thm:joint_initial_EV}
The joint distribution of the MMU, with initial prices as reference prices, and the EV is expressed as:
\begin{equation}\label{eq:joint_initial_EV}
\begin{split}
&\Pr_\eta[w \leq W^\eta_{M(\bp)}(y-p_i,i), EV^\eta \leq z , i = J^\eta(\bp,y), j = J^\eta(\bp',y)] \\
&\qquad\qquad =P_{i,j}\Big(\bp, \bmin(\bp', \bp+z),y\Big) \mathbb{I}\left[p_j' \leq p_j + z\right]\mathbb{I}\left[w \leq  y\right].\\
\end{split}
\end{equation}
\end{thm}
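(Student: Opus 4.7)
The plan is to follow the template established by the proofs of Theorem~\ref{thm:joint_distribution_CV_arbitrary} and Theorem~\ref{thm:ev}, exploiting that when the MMU uses \emph{initial} prices as reference and the difference measure is the EV, both the MMU condition and the EV condition interact with shifts of the \emph{initial} price vector $\bp$, which will make the two "min" operations factor into a single $\bmin(\bp', \bp + z)$ on the second argument (together with an $\mathbb{I}[w\leq y]$ indicator).

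First, I would rewrite the event. By Lemma~\ref{lem:characterisation_WB} applied to the MMU with reference prices $\bp$, the virtual prices are $\widetilde{p}_c(w) = y - w + p_c$, and the event $\{w \leq W^\omega_{M(\bp)}(y-p_i, i)\}$ becomes $\{U^\omega_i(y-p_i) \geq \max_c U^\omega_c(w - p_c)\}$. For the EV part, I would mimic the argument in the proof of Theorem~\ref{thm:ev} (which is the EV analogue of Lemma~\ref{lem:characterisation_CV}) to replace $\{EV^\omega \leq z\} \cap \{i = J^\omega(\bp,y)\} \cap \{j = J^\omega(\bp',y)\}$ by the same two choice events together with $\{U^\omega_j(y-p_j') \geq \max_c U^\omega_c(y - p_c - z)\}$.

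Next, I would combine the four conditions and peel off two indicators via Assumption~\ref{assumption1} (strict monotonicity in the numeraire). Taking $c = i$ in the MMU inequality forces $y - p_i \geq w - p_i$, i.e.\ $w \leq y$; taking $c = j$ in the EV inequality forces $p_j' \leq p_j + z$. Once $w \leq y$ is imposed, we have $y - w + p_c \geq p_c$ for every $c$, so the MMU inequality for $c \neq i$ is already implied by the choice condition $i = J^\omega(\bp, y)$ and disappears. The EV inequality for $c \neq j$ combines with $j = J^\omega(\bp',y)$ via $\min(p_c', p_c + z)$, exactly as in the proof of Theorem~\ref{thm:ev}.

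What remains is
\begin{equation*}
\Pr_\omega\bigl[U^\omega_i(y-p_i) \geq \max_{k\neq i} U^\omega_k(y-p_k),\; U^\omega_j(y-p_j') \geq \max_{l\neq j} U^\omega_l\bigl(y - \min(p_l', p_l + z)\bigr)\bigr] \,\mathbb{I}[w\leq y]\,\mathbb{I}[p_j' \leq p_j + z],
\end{equation*}
which I would identify with $P_{i,j}\bigl(\bp,\, \bmin(\bp', \bp + z),\, y\bigr)\,\mathbb{I}[p_j' \leq p_j + z]\,\mathbb{I}[w \leq y]$ by definition of the transition probability. No serious obstacle is anticipated; the only mildly subtle point is recognizing that the MMU condition becomes vacuous on the set $\{w \leq y\}$ once the choice event $\{i = J^\omega(\bp,y)\}$ is imposed, which is the discrete-choice counterpart of the fact (Corollary~\ref{cor:MMU}) that $W^\omega_{M(\bp)}$ evaluated at the optimal bundle under $\bp$ is deterministically equal to $y$.
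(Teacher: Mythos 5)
Your proposal is correct and follows essentially the same route as the paper's own proof: translate the MMU event via Lemma~\ref{lem:characterisation_WB} with virtual prices $\widetilde{p}_c(w)=y-w+p_c$, translate the EV event as in Theorem~\ref{thm:ev}, extract the indicators $\mathbb{I}[w\leq y]$ and $\mathbb{I}[p_j'\leq p_j+z]$ by monotonicity, and absorb the remaining inequalities into the transition probability $P_{i,j}\bigl(\bp,\bmin(\bp',\bp+z),y\bigr)$. The only cosmetic difference is that the paper writes the first price argument as $\bmin(\bp,\bp+y-w)$ and then simplifies it to $\bp$, whereas you observe directly that the MMU condition is vacuous on $\{w\leq y\}$ given $i=J^\omega(\bp,y)$; these are the same step.
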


\begin{proof}
We have
\begin{equation*}
	\begin{split}
	    &\Pr_\eta[w \leq W^\eta_{M(\bp)}(y-p_i,i), EV^\eta \leq z, i = J^\eta(\bp,y), j = J^\eta(\bp',y)] \\
        &= \Pr_\eta\Big[U^\eta_i(y-p_i) \geq \max_{c'} U^\eta_{c'}(y-(p_{c'} + y -w)), \quad U^\eta_i(y - p_i) \geq \max_{k\neq i}  U^\eta_k(y-p_k) ,\\
            &	\qquad\qquad  U^\eta_j(y- p'_j) \geq \max_{l\neq j}  U^\eta_l(y-p'_l) , \quad U^\eta_j(y-p_j') \geq \max_{c } \{ U^\eta_c(y-p_c - z)\}\Big] \\
        &= \Pr_\eta\Big[U^\eta_i(y - p_i) \geq \max_{k\neq i}  U^\eta_k(y-\min(p_k, p_{k} + y - w)) ,\\
        &\qquad\qquad U^\eta_j(y- p_j') \geq \max_{l\neq j}  U^\eta_l(y-\min(p_l', p_l + z)) \Big]  \mathbb{I}\left[p_j' \leq p_j + z\right]\mathbb{I}\left[p_i \leq  p_i + y - w)\right]\\
        &= P_{i,j}\Big( \bp,(p_j', \bmin(\bp'_{-j}, \bp_{-j}+z)),y\Big) \mathbb{I}\left[p_j' \leq p_j + z\right]\mathbb{I}\left[w \leq  y\right]\\
        &= P_{i,j}\Big(\bp, \bmin(\bp', \bp+z),y\Big) \mathbb{I}\left[p_j' \leq p_j + z\right]\mathbb{I}\left[w \leq  y\right].
    \end{split}
\end{equation*}
\end{proof}

\clearpage
\section{Proofs}\label{app:proofs}
In this section, we provide the proofs of the results in the paper. Note that the results in Theorems~\ref{thm:distribution_CV_initial_final} and~\ref{thm:joint_distribution_CV_arbitrary}, and in Corollaries~\ref{cor:CV} and~\ref{cor:WandCV}, can in fact be seen as applications of Theorem \ref{thm:joint_distribution_general_arbitrary}. However, to provide more insight, we also give direct proofs below. 

\printProofs

\end{appendix}

\end{document}